\newtheoremstyle{exampstyle}
  {} 
  {} 
  {} 
  {} 
  {\bfseries\color{blue}} 
  {.} 
  {.5em} 
  {} 
\theoremstyle{exampstyle}\newtheorem{thm}{Theorem}
\theoremstyle{exampstyle}\newtheorem*{thm*}{Theorem}
\theoremstyle{exampstyle}\newtheorem{defn}{Definition}     
\theoremstyle{exampstyle}     
\theoremstyle{exampstyle}\newtheorem{lem}{Lemma}  
\theoremstyle{exampstyle}\newtheorem{coro}{Corollary}        
\theoremstyle{exampstyle}
\theoremstyle{exampstyle}  
\theoremstyle{exampstyle}\newtheorem{Assumption}{Assumption}
\theoremstyle{exampstyle}
\theoremstyle{definition}
\theoremstyle{exampstyle}
\newtheorem{remarktmp}{Remark}
\newenvironment{remark}
	{ \begin{remarktmp} 	}
	{ 
		\medskip\hfill{\LARGE$\lrcorner$}
		\end{remarktmp} 
	}
\newtheorem{exampletmp}{Example}
\newenvironment{example}
	{ \begin{exampletmp} 	}
	{ 
		\medskip\hfill{\LARGE$\lrcorner$}
		\end{exampletmp} 
	}
\renewcommand{\thethm}{\arabic{thm}}\setcounter{thm}{0}
\newcommand{\Prob}{\mathbb{P}}
\newcommand{\Indicator}{\mathds{1}}
\newcommand{\Trans}{\prime}
\renewcommand{\epsilon}{\varepsilon}
\newcommand{\bC}{\mathbf{C}}
\newcommand{\bR}{\mathbf{R}}
\newcommand{\bz}{\mathbf{z}}
\newcommand{\bmu}{\boldsymbol{\mu}}
\newcommand{\bsigma}{\boldsymbol{\sigma}}
\newcommand{\bOmega}{\boldsymbol{\Omega}}
\newcommand{\Universe}{X}
\newcommand{\dimUniverse}{K}
\newcommand{\Menus}{\mathcal{S}}
\newcommand{\ParSet}{\Theta}
\newcommand{\AttFilter}{\mu}
\newcommand{\bAttFilter}{\boldsymbol{\mu}}
\newcommand{\ChoiProb}{\pi}
\newcommand{\bChoiProb}{\boldsymbol{\pi}}
\newcommand{\Tstat}{\mathscr{T}}
\DeclareSymbolFontAlphabet{\amsmathbb}{AMSb}
\begin{document}

\title{A Random Attention Model\thanks{We thank Ivan Canay, Ignacio Esponda, Rosa Matzkin, Francesca Molinari, Jose Luis Montiel-Olea, Pietro Ortoleva, Kota Saito, Joerg Stoye, and Rocio Titiunik for very helpful comments and suggestions that improved this paper. We also thank the Editor, Emir Kamenica, and four reviewers for their constructive criticism of our paper, which led to substantial improvements. Financial support from the National Science Foundation through grant SES-1628883 is gratefully acknowledged.}\bigskip}
\author{
Matias D. Cattaneo\thanks{Department of Operations Research and Financial Engineering, Princeton University.} 
\and 
Xinwei Ma\thanks{Department of Economics, University of California at San Diego.}
\and
Yusufcan Masatlioglu\thanks{Department of Economics, University of Maryland.}
\and
Elchin Suleymanov\thanks{Department of Economics, Krannert School of Management, Purdue University.}
}\maketitle

\begin{abstract}

This paper illustrates how one can deduce preference from observed choices when attention is not only limited but also random. In contrast to earlier approaches, we introduce a Random Attention Model (RAM) where we abstain from any particular attention formation, and instead consider a large class of nonparametric random attention rules. Our model imposes one intuitive condition, termed \textit{Monotonic Attention}, which captures the idea that each consideration set competes for the decision-maker's attention. We then develop revealed preference theory within RAM and obtain precise testable implications for observable choice probabilities. Based on these theoretical findings, we propose econometric methods for identification, estimation, and inference of the decision maker's preferences. To illustrate the applicability of our results and their concrete empirical content in specific settings, we also develop revealed preference theory and accompanying econometric methods under additional nonparametric assumptions on the consideration set for binary choice problems. Finally, we provide general purpose software implementation of our estimation and inference results, and showcase their performance using simulations.

\end{abstract}

\bigskip
Keywords: revealed preference, limited attention models, random utility models, nonparametric identification, partial identification.

\thispagestyle{empty}
\clearpage

\doublespacing
\setcounter{page}{1}
\pagestyle{plain}

\section{Introduction}

Revealed preference theory is not only a cornerstone of modern economics, but is also the source of important theoretical, methodological and policy implications for many social and behavioral sciences. This theory aims to identify the preferences of a decision maker (e.g., an individual or a firm) from her observed choices (e.g., buying a house or hiring a worker). In its classical formulation, revealed preference theory assumes that the decision maker selects the best available option after full consideration of all possible alternatives presented to her. This assumption leads to specific testable implications based on observed choice patterns but, unfortunately, empirical testing of classical revealed preference theory shows that it is not always compatible with observed choice behavior \citep{Hauser_Wernerfelt_1990_JCR,Goeree_2008_ECMA,vanNierop_et_al_2010_JMR,Honka_Hortacsu_Vitorino_2017_RAND}. For example, \citet*{Reutskaja_et_al_2011_AER} provides interesting experimental evidence against the full attention assumption using eye tracking and choice data.

Motivated by these findings, and the fact that certain theoretically important and empirically relevant choice patterns can not be explained using classical revealed preference theory based on full attention, scholars have proposed other economic models of choice behavior. An alternative is the limited attention model \citep*{Masatlioglu-Nakajima-Ozbay_2012_AER, Lleras_et_al_2017_JET, Dean_Kibris_Masatlioglu_2017_JET}, where decision makers are assumed to select the best available option from a subset of all possible alternatives, known as the consideration set. This framework takes the formation of the consideration set, also known as attention rule or consideration map, as unobservable and hence as an intrinsic feature of the decision maker. Nonetheless, it is possible to develop a fruitful theory of revealed preference within this framework, employing only mild and intuitive nonparametric restrictions on how the decision maker decides to focus attention on specific subsets of all possible alternatives presented to her.

Until very recently, limited attention models have been deterministic, a feature that diminished their empirical applicability: testable implications via revealed preference have relied on the assumption that the decision maker pays attention to the same subset of options every time she is confronted with the same set of available alternatives. This requires that, for example, an online shopper uses always the same keyword and the same search engine (e.g. Google) on the same platform (e.g. tablet) to look for a product. This is obviously restrictive, and can lead to predictions that are inconsistent with observed choice behavior. Aware of this fact, a few scholars have improved deterministic limited attention models by allowing for stochastic attention \citep*{ Manzini-Mariotti_2014_ECMA,Aguiar_2015,Brady-Rehbeck_2016_ECMA,Horan_2018}, which permits the decision maker to pay attention to different subsets with some non-zero probability given the same set of alternatives to choose from. All available results in this literature proceed by first parameterizing the attention rule (i.e., committing to a particular parametric attention rule), and then studying the revealed preference implications of these parametric models. 

In contrast to earlier approaches, we introduce a Random Attention Model (RAM) where we abstain from any specific parametric (stochastic) attention rule, and instead consider a large class of nonparametric random attention rules. Our model imposes one intuitive condition, termed \textit{Monotonic Attention}, which is satisfied by many stochastic attention rules. Given that consideration sets are unobservable, this feature is crucial for applicability of our revealed preference results, as our findings and empirical implications are valid under many different, particular attention rules that could be operating in the background. In other words, our revealed preference results are derived from nonparametric restrictions on the attention rule and hence are more robust to misspecification biases.

RAM is best suited for eliciting information about the preference ordering of a single decision-making unit when her choices are observed repeatedly.\footnote{The finding that individual choices frequently exhibit randomness was first reported in \citet*{Tversky_1969_PR} and has now been illustrated by \citet*{Agranov_Ortoleva_2017} and numerous other studies. Similar to our work, \citet*{Manzini-Mariotti_2014_ECMA}, \citet*{Fudenberg_Iijima_Strzalecki_2015_ECMA}, and \citet*{Brady-Rehbeck_2016_ECMA}, among others, have developed models which allow the analyst to reveal information about the agent's preferences from her observed random choices.} For example, scanner data keeps track of the same single consumer's purchases across repeated visits, where the grocery store adjusts product varieties and arrangements regularly. Another example is web advertising on digital platforms, such as search engines or shopping sites, where not only abundant records from each individual decision maker are available, but also it is common to see manipulations/experiments altering the options offered to them. A third example is given in \citet*{Kawaguchi-Uetake-Watanabe_2016_wp}, where large data on each consumer's choices from vending machines (with varying product availability) is analyzed. In addition, our model can be used empirically with aggregate data on a group of distinct decision makers, provided each of them may differ on what they pay attention to but all share the same preference.

Our key identifying assumption, \textit{Monotonic Attention}, restricts the possibly stochastic attention formation process in a very intuitive way: each consideration set competes for the decision maker's attention, and hence the probability of paying attention to a particular subset is assumed not to decrease when the total number of possible consideration sets decreases. We show that this single nonparametric assumption is general enough to nest most (if not all) previously proposed deterministic and random limited attention models. Furthermore, under our proposed monotonic attention assumption, we are able to develop a theory of revealed preference, obtain specific testable implications, and (partially) identify the underlying preferences of the decision maker by investigating her observed choice probabilities. Our revealed preference results are applicable to a wide range of attention rules, including the parametric ones currently available in the literature which, as we show, satisfy the monotonic attention assumption.

Based on these theoretical findings, we also develop econometric results for identification, estimation, and inference of the decision maker's preferences, as well as specification testing of RAM. We show that RAM implies that the set of partially identified preference orderings containing the decision maker's true preferences is equivalent to a set of inequality restrictions on the choice probabilities (one for each preference ordering in the identified set). This result allows us to employ the identifiable/estimable choice probabilities to (i) develop a model specification test (i.e., test whether there exists a non-empty set of preference orderings compatible with RAM), (ii) conduct hypothesis testing on specific preference orderings (i.e., test whether the inequality constraints on the choice probabilities are satisfied), and (iii) develop confidence sets containing the decision maker's true preferences with pre-specified coverage (i.e., via test inversion). Our econometric methods rely on ideas and results from the literature on partially identified models and moment inequality testing: see \citet*{Canay_Shaikh_2017_ChBook}, \citet*{Ho_Rosen_2017_ChBook} and \citet{Molinari_2019_Handbook} for recent reviews and further references.

RAM is fully nonparametric and agnostic because it relies on the monotonic attention assumption only. As a consequence, it may lead to relatively weak testable implications in some applications, that is, ``little'' revelation or a ``large'' identified set of preferences. However, RAM also provides a basis for incorporating additional (parametric and) nonparametric restrictions that can substantially improve identification power. In this paper, we illustrate how RAM can be combined with additional, mild nonparametric restrictions to tighten identification in non-trivial ways: in Section \ref{section:binary}, we incorporate an additional restriction on attention rule for binary choice problems, and show that this alone leads to important revelation improvements within RAM. We also illustrate this result numerically in our simulation study.

Finally, we implement our estimation and inference methods in the general-purpose software package \texttt{ramchoice} for \texttt{R}---see \url{https://cran.r-project.org/package=ramchoice} for details. Our novel identification results allow us to develop inference methods that avoid optimization over the possibly high-dimensional space of attention rules, leading to methods that are very fast and easy to implement when applied to realistic empirical problems. See the Supplemental Appendix for numerical evidence.

Our work contributes to both economic theory and econometrics. We describe several examples covered by our model in Section \ref{subsection:monotone} after we introduce our proposed RAM. We also discuss in detail the connections and distinctions between this paper and the economic theory literature in Section SA.1 of the Supplemental Appendix. In particular, we show how RAM nests and/or connects to the recent work by \citet*{Manzini-Mariotti_2014_ECMA}, \citet*{Brady-Rehbeck_2016_ECMA}, \citet*{Gul_Natenzon_Pesendorfer_2014_ECMA}, \citet*{Echenique_Saito_Tserenjigmid_2018}, \citet*{Echenique_Saito_2019}, \citet*{Fudenberg_Iijima_Strzalecki_2015_ECMA}, and \citet*{Aguiar_Boccardi_Dean_2016_JET}, among others.

This paper is also related to a rich econometric literature on nonparametric identification, estimation and inference both in the specific context of random utility models, and more generally. See \citet*{Matzkin_2013_ARE} for a review and further references on nonparametric identification, \citet*{Hausman-Newey_2017_ARE} for a recent review and further references on nonparametric welfare analysis, and \citet*{Blundell-Kristensen-Matzkin_2014_JOE}, \citet{Kawaguchi_2017_JoE}, \citet*{Kitamura-Stoye_2018_ECMA}, and \citet*{Deb-Kitamura-Quah-Stoye_2018_wp} for a sample of recent contributions and further references. As mentioned above, a key feature of RAM is that our proposed monotonic attention condition on attention rule nests previous models as special cases, and also covers many new models of choice behavior. In particular, RAM can accommodate more choice behaviors/patterns than what can be rationalized by random utility models. This is important because numerous studies in psychology, finance and marketing have shown that decision makers exhibit limited attention when making choices: they only compare (and choose from) a subset of all available options. Whenever decision makers do not pay full attention to all options, implications from revealed preference theory under random utility models no longer hold in general, implying that empirical testing of substantive hypotheses as well as policy recommendations based on random utility models will be invalid. On the other hand, our results may remain valid.

In contemporaneous work, a few scholars have also developed identification and inference results under (random) limited attention, trying to connect behavioral theory and econometric methods, as we do in this paper. Three recent examples of this new research area include \citet*{Abaluck_Adams_2017}, \citet*{Dardanoni_Manzini_Mariotti_Tyson_2018}, and \citet*{Barseghyan-Coughlin-Molinari-Teitelbaum_2018_wp}. These papers are complementary to ours insofar different assumptions on the random attention rule and preference(s) are imposed, which lead to different levels of (partial) identification of preference(s) and (random) attention rule(s). For a further discussion on the relationship with these papers, see Section SA.1 of the Supplemental Appendix.

The rest of the paper proceeds as follows. Section \ref{section:setup} presents the basic setup, where our key monotonicity assumption on the decision maker's stochastic attention rule is presented in Section \ref{subsection:monotone}. Section \ref{section:RAM} discusses in detail our random attention model, including the main revealed preference results. Section \ref{section:econometrics} presents our main econometrics methods, including nonparametric (partial) identification, estimation, and inference results. In Section \ref{section:binary}, we consider additional restrictions on the attention rule for binary choice problems, which can help improve our identification and inference results considerably. We also consider random attention filters in Section \ref{section:RA filter}, which are one of the motivating examples of monotonic attention rules. In this case, however, there is no additional identification. Section \ref{section:simulations} summarizes the findings from a simulation study. Finally, Section \ref{section:conclusion} concludes with a discussion of directions for future research. A companion online Supplemental Appendix includes more examples, extensions and other methodological results, omitted proofs, and additional simulation evidence.

\section{Setup}\label{section:setup}

We designate a finite set $\Universe$ to act as the universal set of all mutually exclusive alternatives. This set is thus viewed as the grand alternative space, and is kept fixed throughout. A typical element of $\Universe$ is denoted by $a$ and its cardinality is $|\Universe|=K$. We let $\mathcal{X}$ denote the set of all non-empty subsets of $X$. Each member of $\mathcal{X}$ defines a choice problem. 

\begin{defn}[Choice Rule] A \emph{choice rule} is a map $\ChoiProb: X \times \mathcal{X} \rightarrow [0,1]$ such that for all $ S \in \mathcal{X}$, $\ChoiProb(a | S)\geq 0$ for all $a \in S$, $\ChoiProb(a | S)= 0$ for all $a \notin S$, and $\sum\limits_{a \in S}\ChoiProb(a|S)=1$.
\end{defn} 

Thus, $\ChoiProb(a |S)$ represents the probability that the decision maker chooses alternative $a$ from the choice problem $S$. Our formulation allows both stochastic and deterministic choice rules. If $\ChoiProb(a |S)$ is either $0$ or $1$, then choices are deterministic. For simplicity in the exposition, we assume that all choice problems are potentially observable throughout the main paper, but this assumption is relaxed in Section SA.3 of the Supplemental Appendix to account for cases where only data on a subcollection of choice problems is available.

The key ingredient in our model is probabilistic consideration sets. Given a choice problem $S$, each non-empty subset of $S$ could be a consideration set with certain probability. We impose that each frequency is between $0$ and $1$ and that the total frequency adds up to $1$. Formally, 

\begin{defn}[Attention Rule]\label{Definition: Attention Rule} An \emph{attention rule} is a map $\AttFilter: \mathcal{X} \times \mathcal{X} \rightarrow [0,1]$ such that for all $ S \in \mathcal{X}$, $\AttFilter(T|S)\geq 0$ for all $T \subset S$, $\AttFilter(T | S)= 0$ for all $T \not \subset S$, and $\sum\limits_{T \subset S} \AttFilter(T|S)=1$.
\end{defn} 

Thus, $\AttFilter(T|S)$ represents the probability of paying attention to the consideration set $T \subset S $ when the choice problem is $S$. This formulation captures both deterministic and stochastic attention rules. For example, $\AttFilter(S|S)=1$ represents an agent with full attention. Given our approach, we can always extract the probability of paying attention to a specific alternative: For a given $a \in S$, $\sum\limits_{a \in T \subset S} \AttFilter(T|S) $ is the probability of paying attention to $a$ in the choice problem $S$. The probabilities on consideration sets allow us to derive the attention probabilities on alternatives uniquely.

\subsection{Monotonic Attention}\label{subsection:monotone}

We consider a choice model where a decision maker picks the maximal alternative with respect to her preference among the alternatives she pays attention to. Our ultimate goal is to elicit her preferences from observed choice behavior without requiring any information on consideration sets. Of course, this is impossible without any restrictions on her (possibly random) attention rule. For example, a decision maker's choice can always be rationalized by assuming she only pays attention to singleton sets. Because the consumer never considers two alternatives together, one cannot infer her preferences at all.

We propose a property (i.e., an identifying restriction) on how stochastic consideration sets change as choice problems change, as opposed to explicitly modeling how the choice problem determines the consideration set. We argue below that this nonparametric property is indeed satisfied by many problems of interest and mimics heuristics people use in real life (see examples below and in Section SA.2 of the Supplemental Appendix). This approach makes it possible to apply our method to elicit preference without relying on a particular formation mechanism of consideration sets.

\begin{Assumption}[Monotonic Attention]\label{Assumption: monotonicity} 
For any $a \in S-T$, $ \mu(T|S) \leq \mu(T|S-a).$
\end{Assumption}

Monotonic $\mu$ captures the idea that each consideration set competes for consumers' attention: the probability of a particular consideration set does not shrink when the number of possible consideration sets decreases. Removing an alternative that does not belong to the consideration set $T$ results in less competition for $T$, hence the probability of $T$ being the consideration set in the new choice problem is weakly higher. Our assumption is similar to the regularity condition proposed by \cite{Suppes-Luce_1965_Handbook}. The key difference is that their regularity condition is defined on choice probabilities, while our assumption is defined on attention probabilities.

To demonstrate the richness of the framework and motivate the analysis to follow, we discuss six leading examples of families of monotonic attention rules, that is, attention rules satisfying Assumption \ref{Assumption: monotonicity}. We offer several more examples in Section SA.2 of the Supplemental Appendix. The first example is deterministic (i.e., $\AttFilter(T|S) $ is either $0$ or $1$), but the others are all stochastic.

\begin{enumerate}

\item \citep*[\textsc{Attention Filter};][]{Masatlioglu-Nakajima-Ozbay_2012_AER} A large class of deterministic attention rules, leading to consideration sets that do not change if an item not attracting attention is made unavailable (Attention Filter), was introduced by \citet{Masatlioglu-Nakajima-Ozbay_2012_AER}. A classical example in this class is when a decision maker considers all the items appearing in the first page of search results and overlooks the rest. Formally, let $\Gamma(S)$ be the deterministic consideration set when the choice problem is $S$, and hence $\Gamma(S) \subset S$. Then, $\Gamma$ is an Attention Filter if when $a \notin \Gamma(S)$, then $\Gamma(S - a)=\Gamma(S)$. In our framework, this class corresponds to the case $\AttFilter(T|S)=1$ if $T = \Gamma(S)$, and $0$ otherwise.

\item (\textsc{Random Attention Filters}) Consider a decision maker whose attention is deterministic but utilizes different deterministic attention filters on different occasions. For example, it is well-known that search behavior on distinct platforms (mobile, tablet, and desktop) is drastically different (e.g., the same search engine produces different first page lists depending on the platform, or different platforms utilize different search algorithms). In such cases, while the consideration set comes from a (deterministic) attention filter for each platform, the resulting consideration set is random. Formally, if a decision maker utilizes each attention filter $\Gamma_j$ with probability $\psi_j$, then the attention rule can be written as
\[
\AttFilter(T|S)=\sum_{j}\Indicator(\Gamma_j(S)=T)\cdot \psi_j.
\]

We will pay special attention to this class of attention rules in Section \ref{section:RA filter}.

\item \citep*[\textsc{Independent Consideration};][]{Manzini-Mariotti_2014_ECMA} Consider a decision maker who pays attention to each alternative $a$ with a fixed probability $\gamma(a) \in (0,1)$. $\gamma$ represents the degree of brand awareness for a product, or the willingness of an agent to seriously evaluate a political candidate. The frequency of each set being the consideration set can be expressed as follows: for all $T \subset S$, 
\[
\AttFilter(T|S)= \frac{1}{\beta_S}\prod_{a\in T}\gamma(a)\prod_{a\in S-T}(1-\gamma(a)),
\]
where $\beta_{S}=1-\prod\limits_{a \in S} (1-\gamma(a))$, which represents the probability that the decision maker considers no alternative in $S$, is used to adjust each probability so that they sum up to $1$.

\item \citep[\textsc{Logit Attention};][]{Brady-Rehbeck_2016_ECMA} Consider a decision maker who assigns a positive weight for each non-empty subset of $X$. Psychologically $w_{T}$ is a strength associated with the subset $T$. The probability of considering $T$ in $S$ can be written as follows: 
\[
\AttFilter(T|S)=\frac{w_{T}}{\sum\limits_{\substack{T' \subset S }} w_{T'}}. 
\]
Even though there is no structure on weights in the general version of this model, there are two interesting special cases where weights solely depend on the size of the set. These are $w_{T}=|T|$ and $w_{T}=\frac{1}{|T|}$, which are conceptually different. In the latter, the decision maker tends to have smaller consideration sets, while larger consideration sets are more likely in the former.

\item (\textsc{Dogit Attention}) This example is a generalization of Logit Attention, and is based on the idea of the Dogit model \citep{Gaundry_1979}. A decision maker is captive to a particular consideration set with certain probability, to the extent that she pays attention to that consideration set regardless of the weights of other possible consideration sets. Formally, let
\[
\AttFilter(T|S)= \frac{1}{1+ \sum_{\substack{T' \subset S }} \theta_{T'}} \frac{w_{T}}{\sum_{\substack{T' \subset S }} w_{T'}} + \frac{\theta_{T}}{1+\sum_{\substack{T' \subset S }} \theta_{T'}},
\]
where $\theta_{T} \geq 0$ represents the degree of captivity (impulsivity) of $T$. The ``captivity parameter'' reflects the attachment of a decision maker to a certain consideration set. Since $w_T$ are non-negative, the second term, which is  independent of $w_T$, is the smallest lower bound for $\AttFilter(T|S)$. The larger $\theta_{T}$, the more likely the decision maker is to be captive to $T$ and pays attention to it. When $\theta_T=0$ for all $T$, this model becomes Logit Attention. This formulation is able to distinguish between impulsive and deliberate attention behavior. 

\item (\textsc{Elimination by Aspects}) Consider a decision maker who intentionally or unintentionally focuses on a certain ``aspect'' of alternatives, and then refuses or ignores those alternatives that do not possess that aspect. This model is similar in spirit to \cite{Tversky_1972}. Let $\{j,k,\ell, \dots \}$ be the set of aspects. Let $\omega_j$ represent the probability that aspect $j$ ``draws attention to itself.'' It reflects the salience and/or importance of aspect $j$. All alternatives without that aspect fail to receive attention. Let $B_j$ be the set of alternatives that posses aspect $j$. We assume that each alternative must belong to at least one $B_{j}$ with $\omega_j >0$. If aspect $j$ is the salient aspect, the consideration set is $B_{j} \cap S$ when $S$ is the set of feasible alternatives. The total probability of $T$ being the consideration set is the sum of $\omega_j$ such that $T=B_{j} \cap S$. When there is no alternative in $S$ possessing the salient aspect, a new aspect will be drawn. Formally, the probability of $T$ being the consideration set under $S$ is given by 
\[
\AttFilter(T|S)=\sum_{\substack{B_{j}\cap S=T}} \frac{\omega_j}{\sum_{\substack{B_{k}\cap S\neq\emptyset}}\omega_k}.
\]

\end{enumerate}

These six examples give a sample of different limited attention models of interest in economics, psychology, marketing, and many other disciplines. While these examples are quite distinct from each other, all of them are monotonic attention rules.\footnote{To provide an example where Assumption \ref{Assumption: monotonicity} might be violated, consider a generalization of Independent Consideration of \citet{Manzini-Mariotti_2014_ECMA}. In this generalization, the degree of brand awareness for a product is not only a function of the product but also a function of the context, that is, $\gamma_S(a)$. Then, the frequency of each set being the consideration set is calculated as in Independent Consideration rule. Due to this contextual dependence, 
further restrictions on $\gamma_{S}(a)$ and $\gamma_{S-b}(a)$ are needed to ensure Assumption \ref{Assumption: monotonicity}.}  As a consequence, our revealed preference characterization will be applicable to a wide range of choice rules without committing to a particular attention mechanism, which is not observable in practice and hence untestable. Furthermore, as illustrated by the examples above (and those in Section SA.2 of the Supplemental Appendix), our upcoming characterization, identification, estimation, and inference results nest important previous contributions in the literature.

\section{A Random Attention Model\label{section:RAM}}

We are ready to introduce our random attention model based on Assumption \ref{Assumption: monotonicity}. We assume the decision maker has a strict preference ordering $\succ$ on $X$. To be precise, we assume the preference ordering is an \textit{asymmetric}, \textit{transitive} and \textit{complete} binary relation. A binary relation $\succ$ on a set $X$ is (i) asymmetric, if for all $x, y \in X$, $x \succ y$ implies $y \not\succ x$; (ii) transitive, if for all $x, y,z \in X$, $x \succ y$ and $y \succ z$ imply $x \succ z$; and (iii) complete, if for all $x \neq y \in X $, either $x \succ y$ or $y \succ x$ is true. Consequently, the decision maker always picks the maximal alternative with respect to her preference among the alternatives she pays attention to. Formally,
\begin{defn}[Random Attention Representation]\label{def: random attention representation}
A choice rule $\ChoiProb$ has a \textit{random attention representation} if there exists a preference ordering $\succ $ over $X$ and a monotonic attention rule $\AttFilter$ (Assumption \ref{Assumption: monotonicity}) such that
\[\ChoiProb(a|S)=\sum_{T\subset S}\Indicator(\text{$a$ is $\succ$-best in $T$})\cdot\AttFilter(T|S)\]
for all $a\in S$ and $S\in \mathcal{X}$. 
In this case, we say $\ChoiProb$ is represented by $(\succ, \AttFilter)$. We may also say that $\succ$ represents $\ChoiProb$, which means that there exists some monotonic attention rule $\AttFilter$ such that $(\succ, \AttFilter)$ represents $\ChoiProb$. We also say $\ChoiProb$ is a \textit{Random Attention Model} (RAM).
\end{defn} 

While our framework is designed to model stochastic choices, it captures deterministic choices as well. In classical choice theory, a decision maker chooses the best alternative according to her preferences with probability $1$, hence choice is deterministic. In our framework, this case is captured by a monotone attention rule with $\mu(S|S)=1$. Figure \ref{fig:fig1} gives a graphical representation of RAM.

\begin{figure}[!th]
\centering
\begin{tikzpicture}

\node at (4 , 0)  {$S$};
\node at (9 , 0)  {$T$};
\node at (14, 0)  {$a$ is $\succ$-best in $T$};

\draw (4, 0) circle (0.8);
\draw[dashed] (9, 0) circle (0.6);

\draw[dashed, ->] (5.3, 0)  -- (7.9, 0);
\draw[dashed, ->] (10.1, 0)  -- (12.3, 0);

\node at (6.6 , 0.4)  {$\mu(T|S)$};
\node at (6.6 , -0.4)  {\footnotesize{Attention Rule}};

\node at (11.2 , 0.4)  {$\succ$};
\node at (11.2 , -0.4)  {\footnotesize{Preference}};

\node at (4 , -1.4)  {\footnotesize{Choice Problem}};
\node at (9 , -1.4)  {\footnotesize{Consideration Set}};
\node at (14 , -1.4)  {\footnotesize{Choice}};

\draw[->] (4, -1.8) -- (4, -3)  -- (14, -3) -- (14, -1.8);

\node at (9 , -2.6)  {$\pi(a|S)$};
\node at (9 , -3.4)  {\footnotesize{Choice Rule}};
\end{tikzpicture}
\caption{Illustration of a RAM. \textit{Observable}: choice problem and choice (solid line). \textit{Unobservable}: attention rule, consideration set and preference (dashed line).}
\label{fig:fig1}
\end{figure}

We now derive the implications of our random attention model. They can be used to test the model in terms of observed choice rules/probabilities. In this section, we treat the choice rule as known/observed in order to facilitate the discussion of preference elicitation. In practice, the researcher may only observe a set of choice problems and choices thereof. We discuss econometric implementation in Section \ref{section:econometrics}: even if the choice rule is not directly observed, it is identified (consistently estimable) from choice data.

In the literature, there is a principle called \textit{regularity} \citep[see][]{Suppes-Luce_1965_Handbook}, according to which adding a new alternative should only decrease the probability of choosing one of the existing alternatives. However, empirical findings suggest otherwise. \citet*{Rieskamp_Busemeyer_Mellers_2006_JEL} provide a detailed review of empirical evidence on violations of regularity and alternative theories explaining these violations. Importantly, our model allows regularity violations.

The next example illustrates that adding an alternative to the feasible set can increase the likelihood that an existing alternative is selected. This cannot be accommodated in the Luce (multinomial logit) model, nor in any random utility model. In RAM, the addition of an alternative changes the choice set and therefore the decision maker's attention, which could increase the probability of an existing alternative being chosen.

\begin{example}[Regularity Violation]
Let $X=\{a,b,c\}$ and consider two nested choice problems $\{a,b,c\} $ and $\{a,b\}$. Imagine a decision maker with $a\succ b\succ c$ and the following monotonic attention rule $\AttFilter$. Each row corresponds to a different choice problem and columns represent possible consideration sets. 
\[
\begin{array}{r|ccccccc}
  \AttFilter(T|S) &    T=\{a,b,c\} & \{a,b\} & \{a,c\} & \{b,c\} & \{a\} & \{b\} & \{c\}\\ \hline
  S=\{a,b,c\} & 2/3 & 0 & 0 & 1/6 & 0 & 0 & 1/6 \\
 \{a,b\} & & 1/2 & & & 0 & 1/2 & \\
 \{a,c\} & & & 1/2 & & 0 & & 1/2 \\
 \{b,c\} & & & & 1/2 & & 0 & 1/2 \\
 \end{array}
\]
Then $\ChoiProb(a|\{a,b,c\})=2/3>1/2=\ChoiProb(a|\{a,b\})=\ChoiProb(a|\{a,c\})$. 
\end{example}

This example shows that RAM can explain choice patterns that cannot be explained by the classical random utility model.  Given that the model allows regularity violations, one might think that the model has very limited empirical implications, i.e. that it is too general to have empirical content. However, it is easy to find a choice rule $\ChoiProb$ that lies outside RAM with only three alternatives. Here we provide an example where our model makes very sharp predictions. 

\begin{example}[RAM Violation]\label{example:outside} The following choice rule $\ChoiProb$ is not compatible with our random attention model as long as the decision maker chooses each alternative with positive probability from the set $\{a,b,c\}$, i.e., $\lambda_a  \lambda_b \lambda_c>0$. Each column corresponds to a different choice problem. 
\[
\begin{array}{c|cccc}
 \ChoiProb(\cdot|S) & S=\{a,b,c\} & \{a,b\} & \{a,c\} & \{b,c\} \\ \hline
 a & \lambda_a & 1 & 0 & \\
 b & \lambda_b & 0 & & 1 \\
 c & \lambda_c & & 1 & 0
 \end{array}
\]
We now illustrate that $\ChoiProb$ is not a RAM. Since the choice behavior is symmetric among all binary choices, without loss of generality, assume $a \succ b \succ c$. Given that 
 $c$ is the worst alternative, $\{c\}$ is the only consideration set in which $c$ can be chosen.  Hence  the decision maker must consider the consideration set $\{c\}$ with probability $\lambda_c$ (i.e., $\AttFilter(\{c\}|\{a,b,c\})=\lambda_c$). Assumption \ref{Assumption: monotonicity} implies that $\AttFilter(\{c\}| \{b,c\} )$ must be greater than $\lambda_c >0$. This yields a contradiction since $\ChoiProb(c|\{b,c\})=0$. In sum, given the above binary choices, our model predicts that when the choice set is $\{a,b,c\}$ the decision maker must choose at least one alternative with $0$ probability, which is a stark prediction in probabilistic choice. \end{example}

One might wonder that the model makes a strong prediction due to the cyclical binary choices, i.e., $\ChoiProb(a|\{a,b\})=\ChoiProb(b|\{b,c\})=\ChoiProb(c|\{a,c\})=1$. We can generate a similar prediction where the individual is perfectly rational in the binary choices, i.e., $\ChoiProb(a|\{a,b\})=\ChoiProb(a|\{a,c\})=\ChoiProb(b|\{b,c\})=1$. In this case, our model predicts that the individual cannot chose both $b$ and $c$ with strictly positive probability when the choice problem is $\{a,b,c\}$. Therefore, we obtain similar predictions. Given that RAM has non-trivial empirical content, it is natural to investigate to what extent Assumption \ref{Assumption: monotonicity} can be used to elicit (unobserved) strict preference orderings given (observed) choices of decision makers.

\subsection{Revealed Preference \label{section:revealedpreference}}

In general, a choice rule can have multiple RAM representations with different preference orderings and different attention rules. When multiple representations are possible, we say that $a$ is revealed to be preferred to $b$ if and only if $a$ is preferred to $b$ in all possible RAM representations. This is a very conservative approach as it ensures that we never make false claims about the preference of the decision maker.

\begin{defn}[Revealed Preference]
\label{Definition: revealedpref}
Let $\{(\succ_{j}, \AttFilter_{j})\}_{j=1,\dots,J}$ be all random attention representations of $\ChoiProb$. We say that $a$ is \textit{revealed to be preferred} to $b$ if $a \succ _{j} b$ for all $j$. 
\end{defn} 

We now show how revealed preference theory can still be developed successfully in our RAM framework. If all representations share the same preferences $\succ$ (or if there is a unique representation), then the revealed preference will be equal to $\succ$. In general, if one wants to know whether $a$ is revealed to be preferred to $b$, it would appear necessary to identify all possible $(\succ_{j},\AttFilter_{j})$ representations. However, this is not practical, especially when there are many alternatives. Instead, we shall now provide a handy method to obtain the revealed preference completely.

Our theoretical strategy parallels that of \citet*{Masatlioglu-Nakajima-Ozbay_2012_AER} (MNO) in their study of a deterministic model of inattention. MNO identifies $a$ as revealed preferred to $b$ whenever $a$ is chosen in the presence of $b$, and removing $b$ causes a choice reversal. This particular observation, in conjunction with the structure of attention filters, ensures that the decision maker considers $b$ while choosing $a$. Here, we show that $a$ is revealed preferred to $b$ if removing $b$ causes a regularity violation, that is, $\ChoiProb(a|S) > \ChoiProb(a|S-b)$.  To see this, assume $(\succ,\AttFilter)$ represents $\ChoiProb$ and $\ChoiProb(a|S) > \ChoiProb(a|S-b)$.  By definition, we have \vspace{-0.2cm} 
\begin{alignat*}{2}
\ChoiProb( a |S ) &= \sum_{\substack{  T \subset S,  \\ a\text{ is }{{\succ}} \text{%
-best in }T }} \AttFilter( T|S )\quad &&\ \\
 &=\sum_{\substack{ {  {b\in T \subset S}},  \\ a\text{ is }{{\succ}} \text{-best in }T }}{\AttFilter(T|S)} &&+ \sum_{\substack{ {  {b \notin T \subset S}},  \\ a\text{ is }{{\succ}} \text{%
-best in }T }}{\AttFilter(T|S)} \\
&\leq \sum_{\substack{ b \in T \subset S,  \\ a \text{ is } \succ \text{-best in }T }}  \AttFilter(T|S) &&+ \sum_{\substack{  T \subset S- b,  \\ a\text{ is } \succ  \text{-best in } T }}  \AttFilter ( T| {  {S -b}} )  \qquad (\text{by Assumption \ref{Assumption: monotonicity}})\\
&= \sum_{\substack{ b \in T \subset S,  \\ a\text{ is } \succ \text{-best in }T }}  \AttFilter(T|S) &&+ \quad  {\ChoiProb(a|S -b)}.
\end{alignat*}
Hence, we have the following inequality: $$ {{\ChoiProb(a|S ) - \ChoiProb(a|S -b)}}  \leq \sum_{\substack{ {{b \in T \subset S}},  \\ a\text{ is }{{\succ}} \text{-best in }T}}{\AttFilter(T|S)} $$
Since $\ChoiProb ( a|S ) - \ChoiProb(a|S -b) > 0$, there must exist at least one $T$ such that (i) $b \in T$, (ii) $a\text{ is }{{\succ}} \text{-best in }T$, and (iii) $\AttFilter(T|S) \neq 0$. Therefore, there exists at least one occasion that the decision maker pays attention to $b$ while choosing $a$ (Revealed Preference).  The next lemma summarizes this interesting relationship between regularity violations and revealed preferences. It simply illustrates that the existence of a regularity violation informs us about the underlying preference. 

\begin{lem}\label{Lemma: Removing Dominating}
Let $\ChoiProb$ be a RAM. If $\ChoiProb(a|S) > \ChoiProb(a|S -b)$, then $a$ is revealed to be preferred to $b$.
\end{lem}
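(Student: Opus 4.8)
The plan is to notice that the displayed chain of (in)equalities immediately preceding the statement already establishes the crucial bound for a \emph{single} fixed representation, so the proof reduces to packaging that computation and then quantifying correctly over \emph{all} representations, which is exactly what Definition \ref{Definition: revealedpref} demands.

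First I would fix an arbitrary random attention representation $(\succ,\AttFilter)$ of $\ChoiProb$ and record the harmless fact that $a\neq b$: the hypothesis forces $a\in S$, and the statement is only meaningful (by asymmetry of $\succ$) when $b\neq a$, so that $a\in S-b$ as well. I would then split $\ChoiProb(a|S)$ according to whether the consideration set $T$ contains $b$, bound the block with $b\notin T$ using Assumption \ref{Assumption: monotonicity} (for each such $T$ we have $b\in S-T$, hence $\AttFilter(T|S)\leq\AttFilter(T|S-b)$), and re-sum that bounded block into $\ChoiProb(a|S-b)$. This delivers
\[
\ChoiProb(a|S)-\ChoiProb(a|S-b)\;\leq\;\sum_{\substack{b\in T\subset S,\\ a\text{ is }\succ\text{-best in }T}}\AttFilter(T|S).
\]

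Next I would invoke the hypothesis $\ChoiProb(a|S)-\ChoiProb(a|S-b)>0$ to conclude that the right-hand sum is strictly positive, so at least one summand is nonzero: there exists $T$ with $b\in T\subset S$, $\AttFilter(T|S)>0$, and $a$ the $\succ$-best element of $T$. Since $b\in T$, $a\neq b$, and $a$ is $\succ$-maximal in $T$, this yields $a\succ b$ in this representation. Because $(\succ,\AttFilter)$ was arbitrary, $a\succ_j b$ holds for every representation $j$, which is precisely the definition of $a$ being revealed to be preferred to $b$.

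The step carrying the real content is the monotonicity bound feeding the middle display, as it is the sole place Assumption \ref{Assumption: monotonicity} enters and its direction is what makes the argument close. The point requiring care is logical rather than computational: the decomposition holds for each representation separately, so I must run the entire argument inside a ``for every representation'' quantifier in order to land on the revealed-preference conclusion rather than a claim about one particular $(\succ,\AttFilter)$.
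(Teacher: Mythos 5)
Your proposal is correct and follows essentially the same route as the paper: the paper proves this lemma via exactly the decomposition over consideration sets containing or omitting $b$, the monotonicity bound from Assumption \ref{Assumption: monotonicity}, and the positivity argument extracting a set $T$ with $b\in T$, $a$ being $\succ$-best, and $\AttFilter(T|S)>0$. Your explicit handling of the quantification over all representations (and the remark that $a\neq b$) is a clean packaging of what the paper leaves implicit in its pre-lemma discussion.
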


Lemma \ref{Lemma: Removing Dominating} allows us to define the following binary relation. For any distinct $a$ and $b$, define:
\begin{center}
    $a \mathsf{P}b$, if there exists $S\in \mathcal{X}$ including $a$ and $b$ such that $\ChoiProb(a|S) > \ChoiProb(a|S-b)$.
\end{center}
By Lemma \ref{Lemma: Removing Dominating}, if $a\mathsf{P}b$ then $a$ is revealed to be preferred to $b$. In other words, this condition is sufficient to reveal preference. In addition, since the underlying preference is transitive, we also conclude that she prefers $a$ to $c$ if $a\mathsf{P}b$ and $b\mathsf{P}c$ for some $b$, even when $a\mathsf{P}c$ is not directly revealed from her choices. Therefore, the transitive closure of $\mathsf{P}$, denoted by $\mathsf{P}_\mathtt{R}$, must also be part of her revealed preference. One may wonder whether some revealed preference is overlooked by $\mathsf{P}_\mathtt{R}$. The following theorem, which is our first main result, shows that $\mathsf{P}_\mathtt{R}$ includes all preference information given the observed choice probabilities, under only Assumption \ref{Assumption: monotonicity}.

\begin{thm}[Revealed Preference]\label{Theorem: Revealed Preference}
Let $\ChoiProb$ be a RAM. Then $a$ is revealed to be preferred to $b$ if and only if $a\mathsf{P}_\mathtt{R}b$. \end{thm}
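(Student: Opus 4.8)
The plan is to prove the two directions separately, using Lemma~\ref{Lemma: Removing Dominating} directly for sufficiency and building an explicit representation for necessity. For \emph{sufficiency} ($a\mathsf{P}_\mathtt{R}b \Rightarrow$ revealed preference), I would unwind the transitive closure: $a\mathsf{P}_\mathtt{R}b$ means there is a chain $a=x_0\,\mathsf{P}\,x_1\,\mathsf{P}\cdots\mathsf{P}\,x_m=b$. Lemma~\ref{Lemma: Removing Dominating} applied to each link says $x_k$ is revealed to be preferred to $x_{k+1}$, i.e. $x_k\succ_j x_{k+1}$ in \emph{every} representation $(\succ_j,\AttFilter_j)$ of $\ChoiProb$. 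Since each $\succ_j$ is transitive, $a\succ_j b$ in every representation, which is precisely the definition of $a$ being revealed to be preferred to $b$.

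For \emph{necessity} (revealed preference $\Rightarrow a\mathsf{P}_\mathtt{R}b$) I would argue the contrapositive: if $a\mathsf{P}_\mathtt{R}b$ fails, I will exhibit a representation of $\ChoiProb$ in which $b\succ a$, so that $a$ is \emph{not} revealed preferred to $b$. The engine of this direction is the following construction, which I would state and prove as a separate claim. \textbf{Claim.} If $\succ$ is any linear order extending $\mathsf{P}_\mathtt{R}$ (i.e. $x\mathsf{P}_\mathtt{R}y\Rightarrow x\succ y$), then there is a monotonic attention rule $\AttFilter^{\succ}$ such that $(\succ,\AttFilter^{\succ})$ represents $\ChoiProb$.

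To prove the Claim I would use a nested \emph{lower-contour} attention rule. For each problem $S$, enumerate its elements as $a_1\succ a_2\succ\cdots\succ a_n$, set $\AttFilter^{\succ}(T_i|S)=\ChoiProb(a_i|S)$ on the nested sets $T_i=\{a_i,a_{i+1},\dots,a_n\}$, and set $\AttFilter^{\succ}(T|S)=0$ for every other $T$. Because $a_i$ is the $\succ$-best element of exactly one nested set, namely $T_i$, the representation identity $\ChoiProb(a_i|S)=\sum_{a_i\text{ is }\succ\text{-best in }T}\AttFilter^{\succ}(T|S)$ holds, and the masses sum to one, so $\AttFilter^{\succ}(\cdot|S)$ is a valid attention rule. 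The substance is monotonicity (Assumption~\ref{Assumption: monotonicity}): fixing $T$ and $c\in S-T$, the inequality $\AttFilter^{\succ}(T|S)\le\AttFilter^{\succ}(T|S-c)$ is trivial unless $T$ is a nested set $T_i$ with $c\notin T_i$, which forces $c\succ a_i$. In that case removing the superior element $c$ leaves $T_i$ still the lower-contour set of $a_i$ in $S-c$, so $\AttFilter^{\succ}(T_i|S-c)=\ChoiProb(a_i|S-c)$ and the requirement reduces to $\ChoiProb(a_i|S)\le\ChoiProb(a_i|S-c)$. This must hold, since otherwise $a_i\mathsf{P}c$, hence $a_i\mathsf{P}_\mathtt{R}c$, forcing $a_i\succ c$ and contradicting $c\succ a_i$. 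This contradiction is the heart of the argument, and it is exactly where the definition of $\mathsf{P}$ and the assumption that $\succ$ extends $\mathsf{P}_\mathtt{R}$ are used.

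Finally I would assemble necessity. Since $\ChoiProb$ is a RAM, it admits some representation $(\succ^{*},\AttFilter^{*})$; Lemma~\ref{Lemma: Removing Dominating} gives $\mathsf{P}\subseteq\succ^{*}$, and as $\succ^{*}$ is asymmetric and transitive, $\mathsf{P}_\mathtt{R}\subseteq\succ^{*}$, so $\mathsf{P}_\mathtt{R}$ is a strict partial order. Assuming $a\mathsf{P}_\mathtt{R}b$ is false, the relation $\mathsf{P}_\mathtt{R}\cup\{(b,a)\}$ is still acyclic (a cycle would force $a\mathsf{P}_\mathtt{R}b$), so by Szpilrajn's order-extension theorem it extends to a linear order $\succ$ with $b\succ a$ and $\mathsf{P}_\mathtt{R}\subseteq\succ$. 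The Claim then produces a representation $(\succ,\AttFilter^{\succ})$ of $\ChoiProb$ with $b\succ a$, so $a$ is not revealed preferred to $b$, completing the contrapositive. I expect the main obstacle to be the monotonicity verification inside the Claim: the challenge is to design an attention rule whose cross-menu comparisons collapse precisely onto the regularity inequalities encoded by $\mathsf{P}$, so that mere compatibility of $\succ$ with $\mathsf{P}_\mathtt{R}$ automatically yields Assumption~\ref{Assumption: monotonicity}.
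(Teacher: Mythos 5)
Your proposal is correct and follows essentially the same route as the paper: the ``if'' direction via Lemma \ref{Lemma: Removing Dominating} plus transitivity, and the ``only if'' direction by completing $\mathsf{P}_\mathtt{R}$ to a linear order with $b\succ a$ and building the lower-contour (triangular) attention rule, whose monotonicity is verified by exactly the same contradiction with the definition of $\mathsf{P}$ that appears in the paper's proof of Theorem \ref{Theorem: Characterization}. The only difference is presentational: you make explicit the order-extension (Szpilrajn) step and the unwinding of the transitive closure, which the paper leaves implicit.
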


\begin{proof}
The ``if'' part follows from Lemma \ref{Lemma: Removing Dominating}. To prove the ``only if'' part, we show that given any preference $\succ$ that includes $\mathsf{P}_\mathtt{R}$, there exists a monotonic attention rule $\AttFilter$ such that $(\succ,\AttFilter)$ represents $\ChoiProb$. The details of the construction can be found in the proof of Theorem \ref{Theorem: Characterization}. 
\end{proof}

Theorem \ref{Theorem: Revealed Preference} establishes the empirical content of revealed preferences under monotonic attention only. Our resulting revealed preferences could be incomplete: it may only provide coarse welfare judgments in some cases. At one extreme, there is no preference revelation when there is no regularity violation. This is because the decision maker's behavior can be attributed fully to her preference or to her inattention (i.e., never considering anything other than her actual choice). This highlights the fact that our revealed preference definition is conservative, which guarantees no false claims in terms of revealed preference especially when there are alternative explanations for the same choice behavior. The following example illustrates that we might make misleading inferences if we wrongly believe the decision maker uses a particular attention rule.

\begin{example}[Avoiding Misleading Inference]\label{example: wrong inference} We now describe a typical online customer's search behavior.  For simplicity, there are three products $a$, $b$, and $c$.  She prefers $c$ over $a$ and $a$ over $b$ (not observable). She visits two different search engines: $G$ and $Y$. 85 percent of her search takes place on Engine G across three different platforms: laptop (20\%), tablet (50\%), and smartphone (15\%). Engine G always lists $b$ before $a$ and $a$ before $c$. Due to screen size, Engine G lists up to three, two, and one product information on laptops, tablets, and smartphones, respectively. The rest of her search is on Engine Y (15\%), which has a unique platform. In this engine, $a$ is listed first, if it is available, and clicking $a$'s link will provide information about both $a$ and $c$. If $a$ is not available, $b$ is listed first. In Engine Y, she clicks only one link. (When she uses Engine Y, her consideration set is $\{a,c\}$ when $a$ and $c$ are both available, $\{a\}$ when $a$ is available but not $c$, finally $\{b\}$ when only $b$ and $c$ are available.) Based on her underlying preference, above consideration set formation leads to stochastic choice, the frequencies of which are reported in the following table:
		 
$$\begin{array}{c|cccc}
 \ChoiProb(\cdot|S) & S=\{a,b,c\} & \{a,b\} & \{a,c\} & \{b,c\} \\ \hline
 a & 0.50  & 0.85 & 0.15 & \\
 b & 0.15 & 0.15 &         & 0.30 \\
 c & 0.35 &         & 0.85 & 0.70
 \end{array}$$
 
Assume that we observe the customer's choice data without any knowledge about her underlying search behavior. First, note that above choice data is consistent with the logit attention model of \citet{Brady-Rehbeck_2016_ECMA}.\footnote{For example, letting the preference order be $a\succ b\succ c$ and let weights be given as $w_{\{a\}}=0$, $w_{\{b\}}=1/20$, $w_{\{c\}}=7/20$, $w_{\{a,b\}}=17/60$, $w_{\{a,c\}}=21/340$, $w_{\{b,c\}}=1/10$, $w_{\{a,b,c\}}=79/510$, it is easy to check that this is a logit attention representation of the choice data given above.} In other words, we can apply their revealed preference result for this choice data. Their model, then, concludes that the unique revealed preference is $a \succ b \succ c$, which however is not the true one that has generated the data. Therefore, if we make a mistaken assumption that the customer's behavior is in line with the logit model, we will infer that $c$ is the worst alternative when it is the best product for our customer.  
\end{example}

Example \ref{example: wrong inference} is an example where a specific cnsideration set formation model
leads to wrong conclusions on the revealed preferences. This example highlights the importance of knowledge about the underlying choice procedure when we conduct welfare analysis. In other words, welfare analysis is more delicate a task than it looks. Notice that, in the above example, monotonic attention is satisfied as engines do not change their presentations of first page results when an alternative outside of the first page becomes unavailable. Hence, Theorem \ref{Theorem: Revealed Preference} is applicable. Since $\pi(a|\{a,b,c\})> \pi(a|\{a,c\})$, our model correctly identifies her true preference between $a$ and $b$. However, our model is silent about the relative ranking of $c$. Therefore, while our revealed preference is conservative, it does not make misleading claims.

We now illustrate that Theorem \ref{Theorem: Revealed Preference} could be very useful to understand the attraction effect phenomena. The attraction effect introduced by \citet*{Huber_1982} was the first evidence against the regularity condition. It refers to an inferior product's ability to increase the attractiveness of another alternative when this inferior product is added to a choice set. In a typical attraction effect experiment, we observe $\ChoiProb(a|\{a,b,c\}) > \ChoiProb(a|\{a,b\})$. Assume that we have no information about the alternatives other than the frequency of choices. Then, by simply using observed choice, Theorem \ref{Theorem: Revealed Preference} informs us that the third product $c$ is indeed an inferior alternative compared to $a$ ($a \succ c$). This is exactly how these alternatives are chosen in these experiments. While alternatives $a$ and $b$ are not comparable, alternative $c$, which is also not comparable to $b$, is dominated by $a$. Theorem \ref{Theorem: Revealed Preference} informs us about the nature of products by just observing choice frequencies. 

Our revealed preference result includes the one in \citet*{Masatlioglu-Nakajima-Ozbay_2012_AER} for attention filters (i.e., non-random monotonic attention rules). In their model, $a$ is revealed to be preferred to $b$ if there is a choice problem such that $a$ is chosen and $b$ is available, but it is no longer chosen when $b$ is removed from the choice problem. This means we have $1=\ChoiProb(a|S) > \ChoiProb(a|S- b)=0$. Given Theorem \ref{Theorem: Revealed Preference}, this reveals that $a$ is better than $b$. On the other hand, generalizing this result to non-deterministic attention rules allows for a broader class of empirical and theoretical settings to be analyzed, hence our revealed preference result (Theorem \ref{Theorem: Revealed Preference}) is strictly richer than those obtained in previous work. For example, in a deterministic world with three alternatives, there is no data revealing the entire preference. On the other hand, we illustrate that it is possible to reveal the entire preference in RAM with only three alternatives. This discussion makes clear the connection between deterministic and probabilistic choice in terms of revealed preference. 

\begin{example}[Full Revelation]\label{example:full_revelation_general}
Consider the following stochastic choice with three alternatives:
\begin{equation*}\begin{array}{c|cccc}
 \ChoiProb(\cdot|S) & S=\{a,b,c\} & \{a,b\} & \{a,c\} & \{b,c\} \\ \hline
 a & \lambda & 1-\lambda_b & \lambda_a & \\
 b & 1-\lambda & \lambda_b & & 1-\lambda_c \\
 c & 0 & & 1-\lambda_a & \lambda_c
 \end{array}
\end{equation*}
If  $1- \lambda_b > \lambda > \lambda_a,\lambda_c$, then we can  verify that $\ChoiProb$ has a random attention representation (see Theorem \ref{Theorem: Characterization}). Now we show that in all possible representations of $\ChoiProb$, $a\succ b\succ c$ must hold. By Lemma \ref{Lemma: Removing Dominating}, $\ChoiProb(a|\{a,b,c\}) > \ChoiProb(a|\{a,c\})$ implies that $a$ is revealed to be preferred to $b$. Similarly, $\ChoiProb(b|\{a,b,c\}) > \ChoiProb(b|\{a,b\})$ implies that $b$ is revealed to be preferred to $c$. Hence preference is uniquely identified. \end{example} 

Example \ref{example:full_revelation_general} also illustrates that one can achieve unique identification of preferences by utilizing Assumption \ref{Assumption: monotonicity} even when observed choices cannot be explained by well known random attention models such as the logit attention model of \citet{Brady-Rehbeck_2016_ECMA} and the independent attention model of \citet{Manzini-Mariotti_2014_ECMA}. To see this point, assume that $\max\{1-\lambda_a,\lambda_c\}>0$. One can show that neither \citet{Brady-Rehbeck_2016_ECMA} nor \citet{Manzini-Mariotti_2014_ECMA} can explain observed choices in this example. First, notice that since both models satisfy Assumption \ref{Assumption: monotonicity} and the preference is uniquely revealed as $a\succ b\succ c$ under Assumption \ref{Assumption: monotonicity}, if the observed choice data can be explained by either model, then their revealed preference must also be $a\succ b\succ c$. That is $c$ must be the worst alternative. On the other hand, $c$ is chosen with zero probability in $\{a,b,c\}$. These models then imply that $c$ must also be chosen with zero probability in $\{a,c\}$ and $\{b,c\}$. This contradicts our assumption that $\max\{1-\lambda_a,\lambda_c\}>0$.

\subsection{A Characterization\label{section:Characterization}}

Theorem \ref{Theorem: Revealed Preference} characterizes the revealed preference in our model. However, it is not applicable unless the observed choice behavior has a random attention representation, which motivates the following question: how can we test whether a choice rule is consistent with RAM? It turns out that RAM can be simply characterized by only one behavioral postulate of choice: acyclicity. Our characterization is based on an idea similar to \citet{Houthakker_1950_Economica}. Choices reveal information about preferences. If these revelations are consistent in the sense that there is no cyclical preference revelation, the choice behavior has a RAM representation.

\begin{thm}[Characterization]\label{Theorem: Characterization}
A choice rule $\ChoiProb$ has a random attention representation if and only if $\mathsf{P}$ has no cycle.
\end{thm}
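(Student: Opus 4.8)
The plan is to prove both directions of the equivalence between the existence of a random attention representation and the acyclicity of $\mathsf{P}$. The ``only if'' direction is the easy one: if $\ChoiProb$ is a RAM, then by Lemma \ref{Lemma: Removing Dominating} every instance of $a\mathsf{P}b$ corresponds to $a$ being revealed preferred to $b$, and since all revealed preferences must be consistent with the single underlying strict preference ordering $\succ$ of any representation, $\mathsf{P}$ is contained in $\succ$. As $\succ$ is asymmetric and transitive, it has no cycle, and hence neither can $\mathsf{P}$. So the bulk of the work lies in the ``if'' direction.

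For the ``if'' direction, I would assume $\mathsf{P}$ has no cycle and construct an explicit monotonic attention rule realizing $\ChoiProb$. Since $\mathsf{P}$ is acyclic, its transitive closure $\mathsf{P}_\mathtt{R}$ is a strict partial order, which can be extended (by a topological-sort / Szpilrajn-type argument) to a complete, asymmetric, transitive preference $\succ$ containing $\mathsf{P}_\mathtt{R}$. The goal is then to exhibit, for this $\succ$, an attention rule $\AttFilter$ satisfying Assumption \ref{Assumption: monotonicity} such that $\ChoiProb(a|S)=\sum_{T\subset S}\Indicator(a\text{ is }\succ\text{-best in }T)\AttFilter(T|S)$ for all $a\in S$. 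The natural construction is to assign, for each choice problem $S$, the mass $\ChoiProb(a|S)$ to a consideration set whose $\succ$-best element is $a$ --- the cleanest candidate being the set $L_\succ(a,S):=\{x\in S : a\succeq_\succ x\}$ of all elements in $S$ that $a$ is weakly preferred to, so that $a$ is by construction $\succ$-best in $L_\succ(a,S)$. Setting $\AttFilter(T|S)=\sum_{a:\,L_\succ(a,S)=T}\ChoiProb(a|S)$ then automatically reproduces the choice probabilities and gives a valid attention rule (nonnegative, summing to one).

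The main obstacle is verifying that this constructed $\AttFilter$ actually satisfies monotonic attention, i.e., $\AttFilter(T|S)\le\AttFilter(T|S-a)$ for every $a\in S-T$. Here is exactly where acyclicity must be used in a nontrivial way. For $T=L_\succ(b,S)$ with some $a\in S-T$ (so $a\succ b$), removing $a$ typically leaves the lower-contour set of $b$ unchanged, $L_\succ(b,S-a)=L_\succ(b,S)$, so the monotonicity inequality reduces to comparing $\ChoiProb(b|S)$ with $\ChoiProb(b|S-a)$; the required inequality $\ChoiProb(b|S)\le\ChoiProb(b|S-a)$ is precisely the statement that \emph{no} $b\mathsf{P}a$ relation holds, i.e., $b$ does not gain probability when $a$ is removed. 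The crux is to show that acyclicity of $\mathsf{P}$ (together with $\succ\supseteq\mathsf{P}_\mathtt{R}$) guarantees these inequalities for all the relevant $(b,a)$ pairs. I would organize this by checking, for each set $T$ arising as some $L_\succ(b,S)$, that the orientation $a\succ b$ forced by $a\notin T$ rules out $b\mathsf{P}a$, since $b\mathsf{P}a$ would place $a$ below $b$ in $\mathsf{P}_\mathtt{R}\subseteq\succ$, contradicting asymmetry of $\succ$. Pinning down that the construction handles all sets $T$ (not only those of the form $L_\succ(b,S)$, which receive zero mass otherwise) and confirming the inequality in the boundary cases where removing $a$ does alter which element is $\succ$-best is the delicate bookkeeping I expect to consume most of the proof.
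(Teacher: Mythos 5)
Your proposal is correct and follows essentially the same route as the paper: extend the acyclic $\mathsf{P}_\mathtt{R}$ to a complete preference $\succ$, place mass $\ChoiProb(a|S)$ on the lower contour set $L_\succ(a,S)$ (the paper's triangular attention rule $\tilde{\AttFilter}$), and verify monotonicity by noting that a violation $\AttFilter(T|S)>\AttFilter(T|S-a)$ with $a\notin T=L_\succ(b,S)$ would force both $a\succ b$ and $b\mathsf{P}a\subseteq\succ$, contradicting asymmetry. The ``boundary cases'' you worry about at the end do not in fact arise, since $a\in S-L_\succ(b,S)$ always gives $L_\succ(b,S-a)=L_\succ(b,S)$, and sets $T$ not of lower-contour form carry zero mass, so monotonicity there is trivial --- exactly as in the paper's argument.
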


Recall that Example \ref{example:outside} is outside of our model. Theorem \ref{Theorem: Characterization} implies that $\mathsf{P}_\mathtt{R}$ must have a cycle. Indeed, we have $a \mathsf{P} b$ due to the regularity violation $\ChoiProb(a|\{a,b,c\}) =\lambda_a >0 = \ChoiProb(a|\{a,c\})$. Similarly, we have  $b \mathsf{P} c$ by $\ChoiProb(b|\{a,b,c\}) =\lambda_b >0 = \ChoiProb(b|\{a,b\})$) and $c \mathsf{P} a$ by ($\ChoiProb(c|\{a,b,c\}) =\lambda_c >0 = \ChoiProb(c|\{b,c\})$). Since $\mathsf{P}$ has a cycle, Example \ref{example:outside} must be outside of our model. Therefore, Theorem \ref{Theorem: Characterization} provides a very simple test of RAM.

Our characterization result also helps us to understand the relation between our model and random utility models. It is well-known in the literature that any choice rule that has a random utility model representation satisfies regularity. On the other hand, for any choice rule that satisfies regularity, $\mathsf{P}$ will trivially have no cycle. Hence, any choice rule that has a random utility model representation also has a RAM representation. However, in terms of modeling purposes, RAM assumes random attention with a deterministic preference whereas random utility model assumes random preference and deterministic (full) attention.

Before closing this section, we sketch the proof of Theorem \ref{Theorem: Characterization}, and provide a corollary which will be used in the next section for developing econometric methods. The ``only if'' part of Theorem \ref{Theorem: Characterization} follows directly from Lemma \ref{Lemma: Removing Dominating}. For the ``if'' part, we need to construct a preference and a monotonic attention rule representing the choice rule. Given that $\mathsf{P}$ has no cycle, there exists a preference relation $\succ$ including $\mathsf{P}_\mathtt{R}$. Indeed, we illustrate that any such completion of $\mathsf{P}_\mathtt{R}$ represents $\ChoiProb$ by an appropriately chosen $\AttFilter$. The construction of $\AttFilter$ depends on a particular completion of $\mathsf{P}_\mathtt{R}$, and is not unique in general. We then illustrate that the constructed $\AttFilter$ satisfies Assumption \ref{Assumption: monotonicity}. At the last step, we show that $(\succ, \AttFilter)$ represents $\ChoiProb$. In Corollary \ref{Lemma: appendix, triangular}, we provide one specific construction of the attention rule. We first make a definition. 

\begin{defn}[Lower Contour Set; Triangular Attention Rule]\label{Definition: appendix, triangular}
Given a preference ordering $\succ$ of the alternatives in $X$: $a_{1,\succ}\succ a_{2,\succ}\succ \cdots \succ a_{K,\succ}$, a \emph{lower contour set} is defined as $L_{k,\succ} = \{ a_{j,\succ}:\ j\geq k \} = \{ a\in X:\ a\preceq a_{k,\succ} \}$. A \emph{triangular attention rule} is an attention rule which puts weights only on lower contour sets. That is, $\mu(T|S)>0$ implies $T = L_{k,\succ}\cap S$ for some $k$ such that $a_{k,\succ}\in S$. 
\end{defn}

\begin{coro}[Monotonic Triangular Attention Rule Representation]\label{Lemma: appendix, triangular}
Assume $(\succ,\mu)$ is a representation of $\pi$ with $\mu$ satisfying Assumption \ref{Assumption: monotonicity}. Then there is a unique triangular attention rule $\tilde{\mu}$ corresponding to $\succ$, which also satisfies Assumption \ref{Assumption: monotonicity}, such that  $(\succ,\tilde{\mu})$ is a representation of $\pi$. 
\end{coro}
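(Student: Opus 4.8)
The plan is to exhibit $\tilde\mu$ explicitly, observe that it is the only triangular candidate, and then show that Assumption \ref{Assumption: monotonicity} for the given $\mu$ forces Assumption \ref{Assumption: monotonicity} for $\tilde\mu$. Fix a choice problem $S$ and enumerate its elements along $\succ$ as $s_1\succ s_2\succ\cdots\succ s_m$. The ``tails'' $\{s_i,\dots,s_m\}$ are precisely the sets $L_{k,\succ}\cap S$ with $a_{k,\succ}\in S$, and the $\succ$-best element of the tail $\{s_i,\dots,s_m\}$ is $s_i$. I would define $\tilde\mu$ by placing all of its mass on these tails, setting $\tilde\mu(\{s_i,\dots,s_m\}\mid S)=\pi(s_i\mid S)$ and $\tilde\mu(T\mid S)=0$ for every $T$ that is not such a tail. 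Non-negativity is clear, and $\sum_{i}\pi(s_i\mid S)=1$ confirms that $\tilde\mu(\cdot\mid S)$ is a valid attention rule.

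Uniqueness and the representation property are then immediate and do not use $\mu$ at all. Since distinct tails have distinct $\succ$-best elements, the only consideration set carrying positive $\tilde\mu$-mass in which $s_i$ is $\succ$-best is the tail $\{s_i,\dots,s_m\}$; hence the formula in Definition \ref{def: random attention representation} returns $\sum_{T:\,s_i\text{ is }\succ\text{-best}}\tilde\mu(T\mid S)=\tilde\mu(\{s_i,\dots,s_m\}\mid S)=\pi(s_i\mid S)$, so $(\succ,\tilde\mu)$ represents $\pi$. The same observation forces \emph{any} triangular rule representing $\pi$ through $\succ$ to assign $\pi(s_i\mid S)$ to the tail at $s_i$ and zero elsewhere, which gives uniqueness.

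The substance is verifying that $\tilde\mu$ inherits Assumption \ref{Assumption: monotonicity}, and this is where the hypothesis on $\mu$ enters. Take $a\in S-T$ with $\tilde\mu(T\mid S)>0$; then $T$ is the tail $\{s_i,\dots,s_m\}$ of $S$ for some $i$, and $a\succ s_i$ because $a\in S$ lies above the tail. Removing $a$ leaves $T$ as the tail of $s_i$ in $S-a$ as well, so $\tilde\mu(T\mid S)=\pi(s_i\mid S)$ and $\tilde\mu(T\mid S-a)=\pi(s_i\mid S-a)$, and the required inequality reduces to $\pi(s_i\mid S)\le\pi(s_i\mid S-a)$ whenever $a\succ s_i$. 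To establish this I would expand both probabilities using the given representation $(\succ,\mu)$: the alternative $s_i$ is $\succ$-best in a consideration set $T'$ exactly when $s_i\in T'$ and $T'$ contains no alternative $\succ$-above $s_i$, i.e. $T'\subset L_{i,\succ}$. Since $a\succ s_i$ gives $a\notin L_{i,\succ}$, the collection of such $T'$ is identical for $S$ and for $S-a$, and each satisfies $a\in S-T'$; Assumption \ref{Assumption: monotonicity} for $\mu$ then yields $\mu(T'\mid S)\le\mu(T'\mid S-a)$ termwise, and summing over $T'$ gives $\pi(s_i\mid S)\le\pi(s_i\mid S-a)$.

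The main obstacle — really the only place where any care is required — is this last step: correctly identifying the consideration sets over which $\pi(s_i\mid\cdot)$ decomposes and checking that deleting a $\succ$-superior alternative neither adds nor removes any of them, so that the monotonicity of $\mu$ transfers term by term. Everything else is bookkeeping about tails of $S$ under $\succ$.
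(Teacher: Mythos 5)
Your proposal is correct and follows essentially the same route as the paper: you construct the identical triangular rule $\tilde\mu(L_{k,\succ}\cap S\,|\,S)=\pi(a_{k,\succ}|S)$, verify the representation and uniqueness by the distinct-best-elements observation, and establish monotonicity of $\tilde\mu$ via the inequality $\pi(s_i|S)\le\pi(s_i|S-a)$ for $a\succ s_i$. The only cosmetic difference is that the paper obtains this last inequality by contradiction through the revealed-preference relation $\mathsf{P}$ (Lemma \ref{Lemma: Removing Dominating}), whereas you inline the same termwise decomposition argument that proves that lemma, making your write-up self-contained but not substantively different.
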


\section{Econometric Methods\label{section:econometrics}}

Theorem \ref{Theorem: Revealed Preference} shows that if the choice probability $\ChoiProb$ is a RAM then preference revelation is possible. Theorem \ref{Theorem: Characterization} gives a falsification result, based on which a specification test can be designed. The challenge for econometric implementation, however, is that our main assumption, monotonic attention, is imposed on the attention rule, and that the attention rule is not identified from a typical choice data and has a much higher dimension than the identified (consistently estimable) choice rule. To circumvent this difficulty, we rely on Corollary \ref{Lemma: appendix, triangular}, which states that if $\ChoiProb$ has a random attention representation $(\succ,\mu)$, then there exists a unique monotonic triangular attention rule $\tilde{\mu}$ such that $(\succ,\tilde{\mu})$ is also a representation of $\pi$. This latter result turns out to be useful for our proposed identification, estimation, and inference methods, as it allows us to construct, for each given preference ordering, a mapping from the identified choice rule to a triangular attention rule, for which we can test whether Assumption \ref{Assumption: monotonicity} holds. This test turns out to be a test on moment inequalities. 

\subsection{Nonparametric Identification}

We first define the set of partially identified preferences, which mirrors Definition \ref{def: random attention representation}, with the only difference that now we fix the choice rule to be identified/estimated from data. More precisely, let $\ChoiProb$ be the underlying choice rule/data generating process. Then a preference $\succ$ is compatible with $\ChoiProb$, denoted by $\succ\ \in\ParSet_{\ChoiProb}$,\footnote{$\ParSet_{\ChoiProb}$ is not the same as $\mathsf{P}_{\mathtt{R}}$ (defined in Section \ref{section:revealedpreference}): $\mathsf{P}_{\mathtt{R}}$ contains all revealed preferences, while $\ParSet_{\ChoiProb}$ is the set of preferences compatible with the choice probability (i.e., all possible completions of $\mathsf{P}_{\mathtt{R}}$). For example, when there is no preference revelation, $\ParSet_{\ChoiProb}$ contains all preference orderings, and $\mathsf{P}_{\mathtt{R}}$ will be empty. For the other extreme that the choice probability is not compatible with our RAM, $\ParSet_{\ChoiProb}$ will be empty and $\mathsf{P}_{\mathtt{R}}$ will involve cycles.} if there exists some monotonic attention rule $\mu$  such that $(\ChoiProb,\succ,\mu)$ is a RAM. 

When $\ChoiProb$ is known, it is possible to employ Theorem \ref{Theorem: Revealed Preference} directly to construct $\ParSet_{\ChoiProb}$. For example, consider the specific preference ordering $a\succ b$, which can be checked by the following procedure. First, check whether $\ChoiProb(b|S)\leq \ChoiProb(b|S-a)$ is violated for some $S$. If so, then we know the preference ordering is not compatible with RAM and hence does not belong to $\ParSet_{\ChoiProb}$ (Lemma \ref{Lemma: Removing Dominating}). On the other hand, if the preference ordering is not rejected in the first step, we need to check along ``longer chains'' (Theorem \ref{Theorem: Revealed Preference}). That is, whether $\ChoiProb(b|S)\leq \ChoiProb(b|S-c)$ and $\ChoiProb(c|T)\leq \ChoiProb(c|T-a)$ are simultaneously violated for some $S$, $T$ and $c$. If so, the preference ordering is rejected (i.e., incompatible with RAM), while if not then a chain of length three needs to be considered. This process goes on for longer chains until either at some step we are able to reject the preference ordering, or all possibilities are exhausted. In practice, additional comparisons are needed since it is rarely the case that only a specific pair of alternatives is of interest. This algorithm, albeit feasible, can be hard to implement in practice, even when the choice probabilities are known. The fact that $\ChoiProb$ has to be estimated makes the problem even more complicated, since it becomes a sequential multiple hypothesis testing problem.

Another possibility is to employ the J-test approach, which stems from the idea that, given the choice rule, compatibility of a preference is equivalent to the existence of an attention rule satisfying monotonicity. To implement the J-test, one fixes the choice rule (identified/estimated from the data) and the preference ordering (the null hypothesis to be tested), and search the space of all monotonic attention rules and check if Definition \ref{def: random attention representation} applies. The J-test procedure can be quite computationally demanding, due to the fact that the space of attention rules has high dimension. We further discuss the J-test approach in Section SA.4.3 of the Supplemental Appendix and how it is related to our proposed procedure.

One of the main purposes of this section is to provide an equivalent form of identification, which (i) is simple to implement, and (ii) remains statistically valid even when applied using estimated choice rules. For ease of exposition, we rewrite the choice rule $\ChoiProb$ as a long vector $\bChoiProb$, whose elements are simply the probability of each alternative $a\in X$ being chosen from a choice problem $S\in\mathcal{X}$. For example, one can label the choice problems as $S_1$, $S_2$, $\cdots$, the alternatives as $a_1$, $a_2$, $\cdots$, $a_K$, and then the vector $\bChoiProb$ simply consists of $\ChoiProb(a_1|S_1)$, $\ChoiProb(a_2|S_1)$, $\cdots$, $\ChoiProb(a_K|S_1)$, $\ChoiProb(a_1|S_2)$, $\ChoiProb(a_2|S_2)$, etc. See Example \ref{Example: construction of R} for a concrete illustration. 

\begin{thm}[Nonparametric Identification]\label{Theorem: moment inequalities}
Given any preference $\succ$, there exists a unique matrix $\bR_\succ$ such that $\succ\,\in\ParSet_{\ChoiProb}$ if and only if $\bR_\succ\bChoiProb\leq \mathbf{0}$.
\end{thm}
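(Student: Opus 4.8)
The plan is to reduce compatibility of $\succ$ to the monotonicity of a single, canonically constructed attention rule—the triangular one from Corollary~\ref{Lemma: appendix, triangular}—and then to observe that this monotonicity is a finite system of linear inequalities in $\bChoiProb$. First I would fix $\succ$ with ordering $a_{1,\succ}\succ\cdots\succ a_{K,\succ}$ and, for each choice problem $S$, define a triangular attention rule $\tilde\AttFilter$ directly from the choice probabilities by assigning to each lower contour set $L_{k,\succ}\cap S$ (for $a_{k,\succ}\in S$) the weight $\ChoiProb(a_{k,\succ}|S)$, and weight $0$ to every other set. A short check shows this is always a valid attention rule (nonnegative weights summing to $\sum_{a\in S}\ChoiProb(a|S)=1$) and that $(\succ,\tilde\AttFilter)$ always represents $\ChoiProb$: in each set $L_{k,\succ}\cap S$ the $\succ$-best element is exactly $a_{k,\succ}$, so each alternative is $\succ$-best in precisely one triangular set and the representation identity reproduces $\ChoiProb(a|S)$. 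Thus $\tilde\AttFilter$ is the unique triangular rule representing $\ChoiProb$, and it is defined regardless of whether $\succ$ is compatible.

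Next I would establish the equivalence $\succ\in\ParSet_{\ChoiProb}$ if and only if $\tilde\AttFilter$ satisfies Assumption~\ref{Assumption: monotonicity}. The backward direction is immediate: since $\tilde\AttFilter$ already represents $\ChoiProb$, if it is monotonic then $(\succ,\tilde\AttFilter)$ is a RAM and $\succ$ is compatible. The forward direction is exactly Corollary~\ref{Lemma: appendix, triangular}: if some monotonic $\AttFilter$ represents $\ChoiProb$, then the unique triangular rule attached to $\succ$ is also monotonic. I would then translate monotonicity of $\tilde\AttFilter$ into inequalities on $\bChoiProb$. For a triangular set $T=L_{k,\succ}\cap S$, the alternatives in $S-T$ are precisely those strictly preferred to $a_{k,\succ}$; removing any such $a$ leaves $T$ a lower contour set of $S-a$ with the same $\succ$-best element $a_{k,\succ}$, so $\tilde\AttFilter(T|S-a)=\ChoiProb(a_{k,\succ}|S-a)$, while zero-weight sets make their monotonicity inequalities trivial. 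Hence monotonicity of $\tilde\AttFilter$ is equivalent to
\[
\ChoiProb(b|S)-\ChoiProb(b|S-a)\leq 0 \qquad \text{for all } S\in\mathcal{X} \text{ and all } a,b\in S \text{ with } a\succ b.
\]
Collecting one row per such triple $(S,a,b)$—a $+1$ in the coordinate of $\bChoiProb$ for $\ChoiProb(b|S)$, a $-1$ for $\ChoiProb(b|S-a)$, and zeros elsewhere—yields a matrix $\bR_\succ$ depending only on $\succ$ (and the fixed labeling of $\bChoiProb$) with $\bR_\succ\bChoiProb\leq\mathbf{0}$ equivalent to compatibility. Uniqueness follows because this canonical construction pins down each row as exactly one monotonicity constraint.

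I expect the main obstacle to be the bookkeeping in the monotonicity translation: one must verify carefully that for the triangular rule only the triangular sets generate nontrivial monotonicity constraints, that removing a $\succ$-superior alternative preserves $T$ as a lower contour set of $S-a$ (so the weight on the smaller problem is again a single choice probability $\ChoiProb(a_{k,\succ}|S-a)$), and that the resulting family is exactly the pairwise ``no regularity violation along $\succ$'' conditions. As a consistency check, this matches what Lemma~\ref{Lemma: Removing Dominating} already forces for necessity, namely that $\ChoiProb(b|S)>\ChoiProb(b|S-a)$ would reveal $b$ preferred to $a$ and hence rule out any $\succ$ with $a\succ b$. The uniqueness statement is the most delicate to phrase precisely, since many matrices encode the same polyhedron; I would resolve this by adopting the canonical one-row-per-constraint convention described above.
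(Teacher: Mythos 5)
Your proposal is correct and follows essentially the same route as the paper: both reduce compatibility of $\succ$ to monotonicity of the unique triangular attention rule built from $\ChoiProb$ via Corollary \ref{Lemma: appendix, triangular}, and then encode that monotonicity as the linear system $\ChoiProb(b|S)-\ChoiProb(b|S-a)\leq 0$ for all $S$ and $a\succ b$ in $S$, exactly as in the paper's factorization $\bR_\succ=\bR\bC_\succ$ and Algorithm \ref{Algorithm: constraints}. Your explicit verification that non-triangular sets yield only trivial constraints and that $T=L_{k,\succ}\cap S$ remains a lower contour set of $S-a$ is the same bookkeeping the paper leaves implicit in that factorization.
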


\begin{proof}
Recall that $(\ChoiProb,\succ)$ has a RAM representation if and only if there exists a monotonic and triangular attention rule ${\mu}$ such that $\ChoiProb$ is induced by ${\mu}$ and $\succ$ (Corollary \ref{Lemma: appendix, triangular}). With this fact, we are able to construct the constraint matrix $\bR_\succ$ explicitly, and write it as a product, $\bR\bC_\succ$. The first matrix, $\bR$, consists of constraints on the attention rules, and the second matrix, $\bC_\succ$, maps the choice rule back to a triangular attention rule. 

First consider $\bR$. The only restrictions imposed on attention rules are from the monotonicity assumption  (Assumption \ref{Assumption: monotonicity}). Again, we represent a generic attention rule $\mu$ as a long vector $\bmu$. Then each row of $\bR$ will consist of one ``$+1$'', one ``$-1$'' and 0 otherwise. The product $\bR\bAttFilter$ then corresponds to $\mu(T|S)-\mu(T|S-a)$, for all $S$, $T\subset S$, and $a\in S-T$. That is, we use $\bR\bAttFilter\leq 0$ to represent Assumption \ref{Assumption: monotonicity}. Note that $\bR$ does not depend on any preference. 

Next consider $\bC_\succ$. Given some preference $\succ$ and the choice rule $\ChoiProb$, the only possible triangular attention rule that can be constructed is (see Corollary \ref{Lemma: appendix, triangular} and the proof of Theorem \ref{Theorem: Characterization} in the Appendix)
\[\mu(T | S) = \sum_{k:\ a_{k,\succ}\in S} \Indicator(T=S\cap L_{k,\succ})\cdot  \ChoiProb(a_{k,\succ}|S),\]
where $\{L_{k,\succ}:\ 1\leq k\leq K\}$ are the lower contour sets corresponding to the preference ordering $\succ$ (Definition \ref{Definition: appendix, triangular}). The above defines the mapping $\bC_{\succ}$, and represents the triangular attention rule as a linear combination of the choice probabilities. This mapping depends on the preference/hypothesis because the triangular attention rule depends on the preference/hypothesis.

Along the construction, both $\bR$ and $\bC_\succ$ are unique, hence showing $\bR_\succ$ is uniquely determined by the preference $\succ$. 
\end{proof}

This theorem states that in order to decide whether a preference $\succ$ is compatible with the (identifiable) choice rule $\ChoiProb$, it suffices to check a collection of inequality constraints. In particular, it is no longer necessary to consider the sequential and multiple testing problems mentioned earlier, or numerically searching in the high dimensional space of attention rules. Moreover, as we discuss below, given the large econometric literature on moment inequality testing, many techniques can be adapted when Theorem \ref{Theorem: moment inequalities} is applied to estimated choice rules. An algorithmic construction of the constraint matrix $\bR_{\succ}$ is given in Algorithm \ref{Algorithm: constraints}. 

\begin{algorithm}
\caption{Construction of $\bR_\succ$.}
\begin{algorithmic} 
\REQUIRE Set a preference $\succ$.
\STATE $\bR_\succ \leftarrow$ empty matrix
\FOR{$S$ in $\mathcal{X}$} \FOR{$a$ in $S$} \FOR{$b\prec a$ in $S$}
\STATE $\bR_\succ \leftarrow$ add row corresponding to $\ChoiProb(b|S)-\ChoiProb(b|S-a)\leq 0$.
\ENDFOR\ENDFOR\ENDFOR
\end{algorithmic}\label{Algorithm: constraints} 
\end{algorithm}

As can be seen, the only input needed is the preference $\succ$, which we are interested in testing against. Each row of $\bR_{\succ}$ consists of one ``$+1$'', one ``$-1$'', and $0$ otherwise. The constraint matrix $\bR_{\succ}$ is non-random and does not depend on the estimated choice probabilities, but rather determined by the collection of (fixed, known to the researcher) restrictions on the estimable choice probabilities. Next we compute the number of constraints (i.e. rows) in $\bR_\succ$ for the complete data case (i.e., when all choice problems are observed):  
\[
\text{\#row}(\bR_\succ) = \sum_{S\in\mathcal{X}} \sum_{a,b\in S}\Indicator(b\prec a) = \sum_{S\in\mathcal{X},\ |S|\geq 2} \binom{|S|}{2} = \sum_{k=2}^{K} \binom{K}{k}\binom{k}{2},
\]
where $K = |X|$ is the number of alternatives in the grand set $X$. Not surprisingly, the number of constraints increases very fast with the size of the grand set. However, once the matrix $\bR_{\succ}$ has been constructed for one preference $\succ$, the constraint matrices for other preference orderings can be obtained by column permutations of $\bR_{\succ}$. This is particular useful and saves computation if there are multiple hypotheses to be tested, as the above algorithm only needs to be implemented once.

Finally, we illustrate that, in simple examples, the constraint matrix $\bR_{\succ}$ can be constructed intuitively.
 
\begin{example}[$\bR_{\succ}$ with Three Alternatives]\label{Example: construction of R}
	Assume there are three alternatives, $a$, $b$ and $c$ in $X$, then the choice rule is represented by a vector in $\mathbb{R}^9$:
	\[
	\bChoiProb = 
	\Big[ \underbrace{\pi(\cdot|\{a,b,c\})}_{\in\mathbb{R}^3},\ \underbrace{\pi(\cdot|\{a,b\})}_{\in\mathbb{R}^2},\ \underbrace{\pi(\cdot|\{a,c\})}_{\in\mathbb{R}^2},\ \underbrace{\pi(\cdot|\{b,c\})}_{\in\mathbb{R}^2}  \Big]^\Trans,
	\]
	where for ease of presentation trivial cases such as $\pi(a|\{b,c\})=0$ and $\pi(b|\{b\})=1$ are ignored. Now consider the preference/hypothesis $b\succ a\succ c$. From Lemma \ref{Lemma: Removing Dominating}, we can reject $b\succ a$ if $\pi(a|\{a,b,c\})>\pi(a|\{a,c\})$. Therefore, we need the reverse inequality in $\bR_{b\succ a\succ c}$, given by a row:
	\[
	\begin{bmatrix}
	1& 0& 0& 0& 0& -1&  0& 0& 0
	\end{bmatrix}.
	\]
	Similarly, we will be able to reject $a\succ c$ if $\pi(c|\{a,b,c\})>\pi(c|\{b,c\})$, which implies the following row in the matrix $\bR_{b\succ a\succ c}$:
	\[
	\begin{bmatrix}
	0& 0& 1& 0& 0&  0&  0& 0& -1
	\end{bmatrix}.
	\]
	The row corresponding to $b\succ c$ is
	\[
	\begin{bmatrix}
	0& 0& 1& 0& 0&  0& -1& 0& 0
	\end{bmatrix}.
	\]
	Therefore, for this simple problem with three alternatives, we have the following constraint matrix:
	\[
	\bR_{b\succ a\succ c} = \begin{bmatrix}
	1& 0& 0& 0& 0& -1&  0& 0& 0\\
	0& 0& 1& 0& 0&  0&  0& 0& -1\\
	0& 0& 1& 0& 0&  0& -1& 0& 0
	\end{bmatrix}.
	\]
	
	Note that for problems with more than three alternatives, the above reasoning does not work if implemented na\"{i}vely. Consider the case $X=\{a,b,c,d\}$. Then $b\succ a$ can be rejected by $\pi(a|\{a,b,c,d\})>\pi(a|\{a,c,d\})$, $\pi(a|\{a,b,d\})>\pi(a|\{a,d\})$ or $\pi(a|\{a,b,c\})>\pi(a|\{a,c\})$, which correspond to three rows in the constraint matrix. 
	
	Again we emphasize that, to construct $\bR_{\succ}$, one does not need to know the numerical value of the choice rule $\pi$. The matrix $\bR_{\succ}$ contains restrictions jointly imposed by the monotonicity assumption and the preference $\succ$ that is to be tested.
\end{example}

\subsection{Hypothesis Testing}

Given the identification result in Theorem \ref{Theorem: moment inequalities}, we can replace the identifiable choice rule with its estimate to conduct estimation and inference of the (partially identifiable) preferences. We can also conduct specification testing by evaluating whether the identified set $\ParSet_{\ChoiProb}$ is empty. To proceed, we assume the following data structure.

\begin{Assumption}[DGP]\label{Assumption 2: iid sampling}
The data is a random sample of choice problems $Y_i$ and corresponding choices $y_i$, $\{(y_i, Y_i): y_i\in Y_i,\ 1\leq i\leq N\}$, generated by the underlying choice rule $\Prob[y_i=a| Y_i=S]=\ChoiProb(a|S)$, with $\Prob[Y_i=S]\geq \underline{p}>0$ for all $S\in\mathcal{X}$.
\end{Assumption}

We only assume the data is generated from some choice rule $\ChoiProb$. We allow for the possibility that it is not a RAM, since our identification result permits falsifying the RAM representation: $\ChoiProb$ has a RAM representation if and only if $\ParSet_{\ChoiProb}$ is not empty according to Theorem \ref{Theorem: moment inequalities}. In addition, we only assume that the choice problem $Y_i$ and the corresponding selection $y_i\in Y_i$ are observed for each unit, while the underlying (possibly random) consideration set for the decision maker remains unobserved (i.e., the set $T$ in Definition \ref{Definition: Attention Rule} and Figure \ref{fig:fig1}). For simplicity, we discuss the case of ``complete data'' where all choice problems are potentially observable, but in Section SA.3 and SA.4.4 of the Supplemental Appendix we extend our work to the case of incomplete data.

The estimated choice rule is denoted by $\hat{\ChoiProb}$, 
\[\hat{\ChoiProb}(a|S) 
  = \frac{ \sum_{1\leq i\leq N} \Indicator(y_i=a,\ Y_i=S) }{ \sum_{1\leq i\leq N} \Indicator(Y_i=S) },
  \qquad a\in S, \quad S\in \mathcal{X}.
\]
For convenience, we represent $\hat{\ChoiProb}(\cdot|S)$ by the vector $\hat{\bChoiProb}_S$, and its population counterpart by $\bChoiProb_{S}$. The choice rules are stacked into a long vector, denoted by $\hat{\bChoiProb}$ with the population counterpart $\bChoiProb$. 

We consider Studentized test statistics, and hence we introduce some additional notation. Let $\bsigma_{\ChoiProb,\succ}$ be the standard deviation of $\bR_{\succ}\hat{\bChoiProb}$, and $\hat{\bsigma}_{\succ}$ be its plug-in estimate. That is, 
\[
\bsigma_{\ChoiProb,\succ} = \sqrt{\mathrm{diag}\Big( \bR_{\succ} \bOmega_{\ChoiProb} \bR_{\succ}^\Trans \Big)}
\qquad\text{and}\qquad
\hat{\bsigma}_{\succ} = \sqrt{\mathrm{diag}\Big( \bR_{\succ} \hat{\bOmega}\bR_{\succ}^\Trans \Big)},
\]
where $\mathrm{diag}(\cdot)$ denotes the operator that extracts the diagonal elements of a square matrix, or constructs a diagonal matrix when applied to a vector. Here $\bOmega_{\ChoiProb}$ is block diagonal, with blocks given by $\frac{1}{\Prob[Y_i=S]}\bOmega_{\ChoiProb,S}$, and $\bOmega_{\ChoiProb,S}=\mathrm{diag}(\bChoiProb_{S}) - \bChoiProb_{S}\bChoiProb_{S}^\Trans$. The estimator $\hat{\bOmega}$ is simply constructed by plugging in the estimated choice rule. 

Consider the null hypothesis $\mathsf{H}_0:\,\succ\ \in\ParSet_{\ChoiProb}$. This null hypothesis is useful if the researcher believes a certain preference represents the underlying data generating process. It also serves as the basis for constructing confidence sets or for ranking preferences according to their (im)plausibility in repeated sampling (for example, via employing associated p-values). Given a specific preference, the test statistic is constructed as the maximum of the Studentized, restricted sample choice probabilities:
\[
\Tstat(\succ) = \sqrt{N}\cdot \max\Big\{(\bR_{\succ}\hat{\bChoiProb}) \oslash \hat{\bsigma}_\succ,\ 0\Big\},
\]
where $\oslash$ denotes elementwise division (i.e, Hadamard division) for conformable matrices. The test statistic is the largest element of the vector $\sqrt{N}(\bR_{\succ}\hat{\bChoiProb}) \oslash \hat{\bsigma}_\succ$ if it is positive, or zero otherwise. The reasoning behind such construction is straightforward: if the preference is compatible with the underlying choice rule, then in the population we have $\bR_{\succ}\bChoiProb\leq \mathbf{0}$, meaning that the test statistic, $\Tstat(\succ)$, should not be ``too large.''

Other test statistics have been proposed for testing moment inequalities, and usually the specific choice depends on the context. When many moment inequalities can be potentially violated simultaneously, it is usually preferred to use a statistic based on truncated Euclidean norm. In our problem, however, we expect only a few moment inequalities to be violated, and therefore we prefer to employ $\Tstat(\succ)$. Having said this, the large sample approximation results given in Theorem \ref{Theorem: valid critical values} can be adapted to handle other test statistics commonly encountered in the literature on moment inequalities.

The null hypothesis is rejected whenever the test statistic is ``too large,'' or more precisely, when it exceeds a critical value, which is chosen to guarantee uniform size control in large samples. We describe how this critical value leading to uniformly valid testing procedures is constructed based on simulating from multivariate normal distributions. Our construction employs the Generalized Moment Selection (GMS) approach of \citet{Andrews-Soares_2010_ECMA}; see also \citet{Canay_2010_JoE} and \citet{Bugni_2016_ET} for closely related methods. The literature on moment inequalities testing includes several alternative approaches, some of which we discuss briefly in Section SA.4.5 of the Supplemental Appendix.

To illustrate the intuition behind the construction, first rewrite the test statistic $\Tstat(\succ)$ as the following:
\[
\Tstat(\succ) =  \max\Big\{\big(\bR_{\succ}\sqrt{N}(\hat{\bChoiProb}-\bChoiProb) + \sqrt{N}\bR_{\succ}\bChoiProb\big) \oslash \hat{\bsigma}_\succ,\ 0\Big\}.
\]
By the central limit theorem, the first component $\sqrt{N}(\hat{\bChoiProb}-\bChoiProb)$ is approximately distributed as $\mathcal{N}(\mathbf{0},\ \bOmega_{\ChoiProb})$. The second component, $\bR_{\succ}\bChoiProb$, although unknown, is bounded above by zero under the null hypothesis. Motivated by these observations, we approximate the distribution of $\Tstat(\succ)$ by simulation as follows:
\[\Tstat^\star(\succ)
  = \sqrt{N}\cdot \max\Big\{
  (\bR_{\succ}\bz^\star) \oslash \hat{\bsigma}_\succ 
  + \psi_N(\bR_{\succ}\hat{\bChoiProb},\hat{\bsigma}_\succ),\ 0\Big\}.
\]
Here $\bz^\star$ is a random vector simulated from the distribution $\mathcal{N}(\mathbf{0}, \hat{\bOmega}/N)$, and $\sqrt{N}\psi_N(\bR_{\succ}\hat{\bChoiProb},\hat{\bsigma}_\succ)$ is used to replace the unknown moment conditions $(\sqrt{N}\bR_{\succ}\bChoiProb) \oslash \hat{\bsigma}_\succ$. Several choices of $\psi_N$ have been proposed. One extreme choice is $\psi_N(\cdot)=0$, so that the upper bound $0$ is used to replace the unknown $\bR_\succ\bChoiProb$. Such a choice also delivers uniformly valid inference in large samples, and is usually referred to as ``critical value based on the least favorable model.'' However, for practical purposes it is better to be less conservative. In our implementation we employ
\[
\psi_N(\bR_{\succ}\hat{\bChoiProb},\hat{\bsigma}_\succ) = \frac{1}{\kappa_N}\Big(\bR_{\succ}\hat{\bChoiProb} \oslash \hat{\bsigma}_\succ\Big)_- ,
\]
where $(\mathbf{a})_-=\mathbf{a}\odot\Indicator(\mathbf{a}\leq 0)$, with $\odot$ denoting the Hadamard product, the indicator function $\Indicator(\cdot)$ operating element-wise on the vector $\mathbf{a}$, and $\kappa_N$ diverges slowly. That is, the function $\psi_N(\cdot)$ retains the non-positive elements of $(\bR_{\succ}\hat{\bChoiProb} \oslash \hat{\bsigma}_\succ)/\kappa_N$, since under the null hypothesis all moment conditions are non-positive. We use $\kappa_N=\sqrt{\ln N}$, which turns out to work well in the simulations described in Section \ref{section:simulations}. For other choices of $\psi_N(\cdot)$, see \cite{Andrews-Soares_2010_ECMA}. 

In practice, $M$ simulations are conducted to obtain the simulated statistics $\{\Tstat_m^\star(\succ) :\ 1\leq m\leq M \}$. Then, given some $\alpha\in(0,1)$, the critical value is constructed as 
\[
c_{\alpha}(\succ) = \inf\Big\{ t:\ 
\frac{1}{M}\sum_{m=1}^M \Indicator\big(\Tstat^\star_m(\succ) \leq t \big)\geq 1-\alpha \Big\},
\]
and the null hypothesis $\mathsf{H}_0:\ \succ\,\in\ParSet_{\ChoiProb}$ is rejected if and only if $\Tstat(\succ)>c_{\alpha}(\succ)$. Alternatively, one can compute the p-value as
\[
\text{pVal}(\succ) = \frac{1}{M}\sum_{m=1}^M \Indicator\Big(\Tstat^\star_m(\succ) > \Tstat(\succ)\Big).
\]

To justify the proposed critical values, it is important to address uniformity issues. A testing procedure is (asymptotically) uniform among a class of data generating processes, if the asymptotic size does not exceed the nominal level across this class. Testing procedures that are valid only pointwise but not uniformly may yield bad approximations to the finite sample distribution, because in finite samples the moment inequalities could be close to binding.  The following theorem shows that conducting inference using the critical values above is uniformly valid.

\begin{thm}[Uniformly Valid Testing]\label{Theorem: valid critical values}
Assume Assumption \ref{Assumption 2: iid sampling} holds. Let $\Pi$ be a class of choice rules, and $\succ$ a preference, such that: (i) for each $\ChoiProb\in\Pi$, $\succ\,\in\ParSet_\ChoiProb$; and (ii) $\inf_{\ChoiProb\in\Pi}\min(\bsigma_{\ChoiProb,\succ})>0$. Then,
\[
\limsup_{N\to\infty} \sup_{\ChoiProb\in\Pi}\Prob\left[ \Tstat(\succ) > c_\alpha(\succ) \right] \leq \alpha.
\]
\end{thm}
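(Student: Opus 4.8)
The plan is to follow the now-standard drifting-sequence (subsequence) argument for Generalized Moment Selection tests, adapted to the present Studentized maximum statistic. The first reduction is to replace the uniform statement by a sequential one: the claimed bound holds if and only if, along every subsequence $\{N_k\}$ and every sequence of data generating processes $\{\ChoiProb_{N_k}\}\subset\Pi$, one has $\limsup_k \Prob_{\ChoiProb_{N_k}}[\Tstat(\succ)>c_\alpha(\succ)]\le\alpha$. Fixing such a subsequence, I would use that each $\ChoiProb_{N_k}$ lives in a product of probability simplices (a compact set) and that hypothesis (ii) bounds $\min(\bsigma_{\ChoiProb_{N_k},\succ})$ away from $0$, to pass to a further subsequence along which (a) $\bOmega_{\ChoiProb_{N_k}}\to\bOmega_*$, so the correlation matrix of $\bR_\succ\hat{\bChoiProb}$ converges to a nondegenerate limit, and (b) the Studentized population slacks $\sqrt{N_k}\,(\bR_\succ\bChoiProb_{N_k})\oslash\bsigma_{\ChoiProb_{N_k},\succ}$ converge coordinatewise to some $\blambda\in[-\infty,0]^d$ (each coordinate is $\le 0$ because $\succ\in\ParSet_{\ChoiProb_{N_k}}$ forces $\bR_\succ\bChoiProb_{N_k}\le\mathbf 0$ by Theorem~\ref{Theorem: moment inequalities}).

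Next I would pin down the limiting law of the statistic. Since $\hat{\bChoiProb}$ is a smooth ratio transformation of sample averages of bounded indicators and $\Prob[Y_i=S]\ge\underline p>0$ keeps every denominator away from zero, a triangular-array central limit theorem (with immediate Lindeberg condition from boundedness) gives $\bR_\succ\sqrt{N_k}(\hat{\bChoiProb}-\bChoiProb_{N_k})\oslash\bsigma_{\ChoiProb_{N_k},\succ}\Rightarrow\mathcal Z$, a mean-zero Gaussian with the limiting correlation matrix. Uniform consistency of $\hat{\bOmega}$ and $\hat{\bsigma}_\succ$ (again from boundedness and the variance lower bound) lets me replace $\bsigma_{\ChoiProb_{N_k},\succ}$ by $\hat{\bsigma}_\succ$, so that by the continuous mapping theorem
\[
\Tstat(\succ)\ \Rightarrow\ T_*:=\max\Big\{\max_j(\mathcal Z_j+\blambda_j),\,0\Big\},
\]
with the convention that coordinates having $\blambda_j=-\infty$ drop out of the inner maximum.

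Then I would analyze the simulated critical value and establish domination. Conditionally on the data, $\sqrt{N_k}\,(\bR_\succ\bz^\star)\oslash\hat{\bsigma}_\succ$ is exactly Gaussian and converges to the same $\mathcal Z$. Writing $\sqrt{N_k}\,\bR_\succ\hat{\bChoiProb}\oslash\hat{\bsigma}_\succ=\bR_\succ\sqrt{N_k}(\hat{\bChoiProb}-\bChoiProb_{N_k})\oslash\hat{\bsigma}_\succ+\sqrt{N_k}(\bR_\succ\bChoiProb_{N_k})\oslash\hat{\bsigma}_\succ$, coordinate $j$ of the GMS term $\sqrt{N_k}\,\psi_{N_k}=\kappa_{N_k}^{-1}(\sqrt{N_k}\,\bR_\succ\hat{\bChoiProb}\oslash\hat{\bsigma}_\succ)_-$ behaves like $\kappa_{N_k}^{-1}(\Op(1)+\blambda_j)_-$; since $\kappa_{N_k}=\sqrt{\ln N_k}\to\infty$, its limit $\gamma_j$ equals $0$ when $\blambda_j$ is finite and $-\infty$ when $\blambda_j=-\infty$. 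Hence $\Tstat^\star(\succ)\Rightarrow T_*^\star:=\max\{\max_j(\mathcal Z_j^\star+\gamma_j),0\}$ with $\mathcal Z^\star\stackrel d=\mathcal Z$, and $c_\alpha(\succ)$ converges to the $(1-\alpha)$-quantile of $T_*^\star$. Because $\blambda_j\le 0$, in every case $\gamma_j\ge\blambda_j$, so on a common space with $\mathcal Z^\star=\mathcal Z$ we get $T_*^\star\ge T_*$ pointwise; thus the limiting critical value is at least $q_{1-\alpha}(T_*)$, and $\limsup_k\Prob[\Tstat(\succ)>c_\alpha(\succ)]\le\Prob[T_*>q_{1-\alpha}(T_*)]\le\alpha$.

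The step I expect to be the main obstacle is the final continuity argument needed to pass from convergence in distribution of $\Tstat(\succ)$ together with convergence of the random critical value to convergence of the rejection probability: this requires the CDF of $T_*$ to be continuous at the limiting critical value. The truncation at $0$ can place an atom of the limiting law exactly at $0$, so one must rule out the limiting quantile sitting on that atom---e.g.\ treating separately the degenerate case where all relevant moments are slack (so $T_*\equiv 0$), and otherwise invoking anticoncentration of Gaussian maxima together with the nondegeneracy from hypothesis (ii). Rigorously handling the $-\infty$ coordinates inside the continuous-mapping and conditional central-limit steps, and making the conditional (simulation) convergence hold uniformly along the drifting sequence, are the remaining technical points.
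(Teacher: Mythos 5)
Your proposal is correct and follows essentially the same route as the paper: the paper's proof (deferred to Section SA.4.1 of its Supplemental Appendix) is built on the Generalized Moment Selection approach of \citet{Andrews-Soares_2010_ECMA}, which is precisely the drifting-subsequence, triangular-array CLT, and critical-value domination argument you lay out, including the treatment of the $\blambda_j=-\infty$ coordinates and the atom-at-zero continuity issue. The only imprecision---claiming $\gamma_j=-\infty$ whenever $\blambda_j=-\infty$, which need not hold at intermediate divergence rates relative to $\kappa_N$---is harmless, since your final inequality uses only $\gamma_j\geq\blambda_j$, exactly as the standard GMS argument does.
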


The proof is postponed to Appendix \ref{Appendix: proof of critical value}. The only requirement is that each moment condition is nondegenerate so that the normalized statistics are well-defined in large samples, but no restrictions on correlations among moment conditions are imposed. 

\subsection{Extensions and Discussion}

We discuss some extensions based on Theorem \ref{Theorem: valid critical values}, including how to construct uniformly valid confidence sets via test inversion, and how to conduct uniformly valid specification testing, both based on testing individual preferences. 

\subsubsection*{Confidence Set}

Given the uniformly valid hypothesis testing procedure already developed in Theorem \ref{Theorem: valid critical values}, we can obtain a uniformly valid confidence set for the (partially) identified preferences by test inversion:
\[
\mathscr{C}(\alpha)  = \Big\{\succ \ :\ \Tstat(\succ) \leq c_\alpha(\succ)  \Big\}.
\]
The resulting confidence set $\mathscr{C}(\alpha)$ exhibits an asymptotic uniform coverage rate of at least $1-\alpha$:
\[
\liminf_{N\to \infty}\inf_{\ChoiProb\in\Pi}\min_{\succ\,\in\ParSet_{\ChoiProb}}\Prob\big[  \succ\,\in \mathscr{C}(\alpha) \big] \geq 1-\alpha.
\]
This inference method offers a uniformly valid confidence set for each member of the partially identified set with pre-specified coverage probability, which is a popular approach in the partial identification literature \citep{Imbens-Manski_2004_ECMA}.

\subsubsection*{Testing Model Compatibility: $\mathsf{H}_0:\mathcal{P}\cap\ParSet_{\ChoiProb}\neq\emptyset$}

Given a collection of preferences, an empirically relevant question is whether any of them is compatible with the data generating process---a basic model specification question. That is, the question is whether the null hypothesis $\mathsf{H}_0:\mathcal{P}\cap\ParSet_{\ChoiProb}\neq\emptyset$ should be rejected. If the null hypothesis is rejected, then certain features shared by the collection of preferences is incompatible with the underlying decision theory (up to Type I error). See \citet*{Bugni-Canay-Shi_2015_JoE}, \citet*{Kaido-Molinari-Stoye_2019_ECMA} and references therein for further discussion of this idea and related methods.

For a concrete example, consider the question that whether $a\succ b$ is compatible with the data generating process. As long as there are more than 2 alternatives in the grand set, a question like this can be accommodated by setting $\mathcal{P} = \{ \succ:\ a\succ b \}$. Rejection of this null hypothesis provides evidence in favor of $b$ being preferred to $a$, (up to Type I error). Of course with more preferences included in the collection, it becomes more difficult to reject the null hypothesis.

The test is based on whether the confidence set intersects with $\mathcal{P}$:
\[
\text{$\mathsf{H}_0$ is rejected} \quad \text{if and only if} \quad \mathscr{C}(\alpha)\cap \mathcal{P}=\emptyset. 
\]
We note that, since $\mathscr{C}(\alpha)$ covers elements in the identified set asymptotically and uniformly with probability $1-\alpha$, the above testing procedure will have uniform size control. Indeed, if $\mathcal{P}\cap \ParSet_{\ChoiProb}\neq \emptyset$, there exists some $\succ\,\in \mathcal{P}\cap \ParSet_{\ChoiProb}$, which will be included in $\mathscr{C}(\alpha)$ with at least $1-\alpha$ probability asymptotically.

One important application of this idea is to set $\mathcal{P}$ as the collection of all possible preferences, which leads to a specification testing. Then, the null hypothesis becomes $\mathsf{H}_0:\ParSet_{\ChoiProb}\neq\emptyset$, and is rejected based on the following rule:
\[
\text{$\mathsf{H}_0$ is rejected} \quad \text{if and only if} \quad \mathscr{C}(\alpha)=\emptyset. 
\]
Rejection in this case implies that at least one of the underlying assumptions is violated, and the data generating process cannot be represented by a RAM (up to Type I error).

\section{Incorporating Additional Restrictions}

Our identification and inference results so far are obtained using RAM only, that is, all empirical content of our revealed preference theory comes from the weak nonparametric Assumption \ref{Assumption: monotonicity}. As mentioned before, our model provides a minimum benchmark for preference revelation, which sometimes may not deliver enough empirical content. However, it is easy to incorporate additional (nonparametric) assumptions in specific settings. In this section, we first illustrate one such possibility, where additional restrictions on the attentional rule are imposed for binary choice problems. This will improve our identification and inference results considerably. We then consider random attention filters, which are one of the motivating examples of monotonic attention rules, and show that in this case there is no identification improvement relative to the baseline RAM.

\subsection{Attentive at Binaries}\label{section:binary}

To motivate our approach, a policy maker may want to conclude that $a$ is revealed to be preferred to $b$ if the decision maker chooses $a$ over $b$ ``frequently enough'' in binary choice problems. ``Frequently enough'' is measured by a constant $\phi \geq 1/2$.\footnote{Even when the policy maker is least cautious,  we need  $\pi(a|\{a,b\}) >  \pi(b|\{a,b\})$ to conclude $a$ is strictly better than $b$. This implies $\pi(a|\{a,b\})>1/2$. Hence $\phi$ must be greater than $1/2$.}  For example, when $\phi=2/3$, it means that choosing $a$ twice more often than choosing $b$ implies $a$ is better than $b$. $\phi$ represents how cautious the policy maker is. Denote by
\[a \mathsf{P}^{\phi} b\quad \text{if and only if}\quad \pi(a|\{a,b\})>\phi.\]
To justify $\mathsf{P}^{\phi}$ as preference revelation, the policy maker inherently assumes that  the decision maker pays attention to the entire set ``frequently enough.'' This is captured by the following assumption on the attention rule. 

\begin{Assumption}[$\phi$-Attentive at Binaries]\label{Assumption: attentive at binaries}
For all $a,b\in\Universe$ and $\phi \geq 1/2$,
\[\AttFilter(\{a,b\}|\{a,b\}) \geq  \frac{1-\phi}{\phi} \max \Big\{\ \AttFilter(\{a\}|\{a,b\})\ ,\ \AttFilter(\{b\}|\{a,b\})\ \Big\}.\]
\end{Assumption}
The quantity $\frac{1-\phi}{\phi}$ is a measure of full attention at binaries. When $\frac{1-\phi}{\phi}=0$ (or $\phi=1$), there is no constraint on $\AttFilter(\{a,b\}|\{a,b\})$. In this case, it is possible that the decision maker only considers singleton consideration sets. When $\frac{1-\phi}{\phi}$ gets larger (or $\phi$ gets smaller), the probability of being fully attentive is strictly positive, which creates room for preference revelation. An alternative way to understand Assumption \ref{Assumption: attentive at binaries} is as follows. Take $\phi = \max\{ \pi(a|\{a,b\}),\pi(b|\{a,b\}) \}$, then $\frac{1-\phi}{\phi} \max \{\AttFilter(\{a\}|\{a,b\})\ ,\ \AttFilter(\{b\}|\{a,b\})\}$ is a strict lower bound on the amount of attention that the decision maker has to pay to both options, for revelation to occur. 

We now illustrate that, under Assumption \ref{Assumption: attentive at binaries}, if $\ChoiProb(a|\{a,b\}) > \phi$ then $a$ is revealed to be preferred to $b$. Let $(\succ,\AttFilter)$ be a RAM representation of $\ChoiProb$ where $\AttFilter$ satisfies  Assumption \ref{Assumption: attentive at binaries}.  First, Assumption \ref{Assumption: attentive at binaries} necessitates  that  $\AttFilter(\{a\}|\{a,b\}) $ cannot be higher than $ \phi$. (To see this, assume $\AttFilter(\{a\}|\{a,b\}) > \phi$. By  Assumption \ref{Assumption: attentive at binaries},  we must have  $\AttFilter(\{a,b\}|\{a,b\}) > 1- \phi$, which is a contradiction.) Then, $\ChoiProb(a|\{a,b\}) >  \phi $ indicates that $a$ is chosen over $b$ whenever the decision maker pays attention to $\{a,b\}$ (revealed preference). Therefore, $a \succ b$. 

\begin{example}[Preference Revelation Without Regularity Violation]
To illustrate the extra identification power of Assumption \ref{Assumption: attentive at binaries}, consider the following stochastic choice with three alternatives and take $\phi=1/2$.
\begin{equation*}\begin{array}{c|cccc}
 \ChoiProb(\cdot|S) & S=\{a,b,c\} & \{a,b\} & \{a,c\} & \{b,c\} \\ \hline
 a & 1/3 & 2/3 & 1/2 & \\
 b & 1/3 & 1/3 &  & 2/3 \\
 c & 1/3 & & 1/2 & 1/3
 \end{array}
\end{equation*}
Note that $\ChoiProb$ satisfies the regularity condition, meaning that there is no preference revelation if only monotonicity (Assumption \ref{Assumption: monotonicity}) is imposed on the attention rule. That is, $\mathsf{P}=\mathsf{P}_{\mathtt{R}}=\emptyset$ (Section \ref{section:revealedpreference}). On the other hand, by utilizing Assumption \ref{Assumption: attentive at binaries}, we can infer the preference completely.  Since $\ChoiProb( a |\{a,b\}) > 1/2$ and $\ChoiProb(b|\{b,c\})>1/2$, we must have $a \mathsf{P}^{\phi} b$ and $b \mathsf{P}^{\phi} c$. Notice that $\ChoiProb( a |\{a,c\}) = 1/2$, hence we cannot directly deduce $a \mathsf{P}^{\phi} c$. Since the underlying preference is transitive, we can conclude that the decision maker prefers $a$ to $c$ as $a\mathsf{P}^{\phi}b$ and $b\mathsf{P}^{\phi}c$, even when $a\mathsf{P}^{\phi}c$ is not directly revealed from her choices. Therefore, the transitive closure of $\mathsf{P}^{\phi}$, denoted by $\mathsf{P}^{\phi}_{\mathtt{R}}$, must also be part of the revealed preference. In this example, note that the same conclusion can be drawn as long as the policy maker assumes $\phi<2/3$. 
\end{example}

To accommodate the revealed preference defined in the original model (i.e., to combine Assumption \ref{Assumption: monotonicity} and \ref{Assumption: attentive at binaries}), we now define the following binary relation:
\begin{align*}
&\ a (\mathsf{P}^{\phi}\cup \mathsf{P}) b\quad \text{if and only if}\\ 
&\ \qquad \quad\text{either (i) for some $S\in \Menus$, $\pi(a|S)>\pi(a|S-b)$, or (ii)  $\pi(a|\{a,b\})>\phi$}.
\end{align*}
$\mathsf{P}^{\phi}\cup \mathsf{P}$ includes our original binary relation $\mathsf{P}$, defined under the monotonic attention restriction (Assumption \ref{Assumption: monotonicity}), as well as $\mathsf{P}^{\phi}$, characterized by the new attentive at binary assumption. Therefore, we can infer more. 

The next theorem shows that acyclicity of $\mathsf{P}^{\phi}\cup \mathsf{P}$, or its transitive closure $(\mathsf{P}^{\phi}\cup \mathsf{P})_{\mathtt{R}}$, provides a simple characterization of the model we consider in this subsection. 

\begin{thm}[Characterization]\label{Theorem:monotonicity + binary attention}
For a given $\phi \geq 1/2$, a choice rule $\ChoiProb$ has a random attention representation $(\succ , \AttFilter )$ where $\AttFilter$  satisfies Assumption \ref{Assumption: monotonicity} and \ref{Assumption: attentive at binaries} if and only if $\mathsf{P}^{\phi}\cup \mathsf{P}$ has no cycle.
\end{thm}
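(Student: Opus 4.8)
The plan is to follow the same two-part template as Theorem~\ref{Theorem: Characterization}, treating the ``only if'' (necessity) direction as the easy one and concentrating effort on constructing a representation for the ``if'' direction. The governing idea is that the added relation $\mathsf{P}^{\phi}$ plays, under Assumption~\ref{Assumption: attentive at binaries}, exactly the role that $\mathsf{P}$ plays under Assumption~\ref{Assumption: monotonicity}: each is forced to sit inside the underlying strict preference, so acyclicity of their union is both necessary and---after completion---sufficient.

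For necessity, suppose $(\succ,\AttFilter)$ represents $\ChoiProb$ with $\AttFilter$ satisfying both assumptions. Lemma~\ref{Lemma: Removing Dominating} gives $\mathsf{P}\subseteq\ \succ$. For $\mathsf{P}^{\phi}$, I would reproduce the short argument given right after Assumption~\ref{Assumption: attentive at binaries}: if $a\mathsf{P}^{\phi}b$, i.e.\ $\ChoiProb(a|\{a,b\})>\phi$, then Assumption~\ref{Assumption: attentive at binaries} forces $\AttFilter(\{a\}|\{a,b\})\le\phi$, so $a$ beats $b$ whenever the full binary set is considered, and hence $a\succ b$. Thus $\mathsf{P}^{\phi}\cup\mathsf{P}\subseteq\ \succ$, and since $\succ$ is asymmetric and transitive it is acyclic; any subrelation of an acyclic relation is acyclic, so $\mathsf{P}^{\phi}\cup\mathsf{P}$ has no cycle.

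For sufficiency, assume $\mathsf{P}^{\phi}\cup\mathsf{P}$ has no cycle. Then its transitive closure $(\mathsf{P}^{\phi}\cup\mathsf{P})_{\mathtt{R}}$ is a strict partial order, and by Szpilrajn's extension theorem I can pick a complete, asymmetric, transitive $\succ$ with $\succ\ \supseteq\ (\mathsf{P}^{\phi}\cup\mathsf{P})_{\mathtt{R}}$. Since $\succ\ \supseteq\ \mathsf{P}_{\mathtt{R}}$, the construction in the ``if'' part of Theorem~\ref{Theorem: Characterization} (equivalently Corollary~\ref{Lemma: appendix, triangular}) yields the triangular attention rule $\tilde{\AttFilter}$ corresponding to $\succ$, which already represents $\ChoiProb$ and satisfies Assumption~\ref{Assumption: monotonicity}. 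It then remains only to check that this same $\tilde{\AttFilter}$ satisfies Assumption~\ref{Assumption: attentive at binaries}.

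This last verification is the crux. Fix a binary problem $\{a,b\}$ and assume without loss of generality $a\succ b$. Evaluating the triangular rule of Definition~\ref{Definition: appendix, triangular} at $S=\{a,b\}$ gives $\tilde{\AttFilter}(\{a,b\}|\{a,b\})=\ChoiProb(a|\{a,b\})$, $\tilde{\AttFilter}(\{b\}|\{a,b\})=\ChoiProb(b|\{a,b\})$, and $\tilde{\AttFilter}(\{a\}|\{a,b\})=0$, because the only lower contour sets meeting $\{a,b\}$ are $\{a,b\}$ itself (from $a$) and $\{b\}$ (from $b$). Substituting into Assumption~\ref{Assumption: attentive at binaries} and using $\ChoiProb(a|\{a,b\})+\ChoiProb(b|\{a,b\})=1$, the required inequality reduces after rearrangement to $\ChoiProb(a|\{a,b\})\ge 1-\phi$, equivalently $\ChoiProb(b|\{a,b\})\le\phi$. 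But $\succ$ extends $\mathsf{P}^{\phi}$ and is asymmetric, so $a\succ b$ rules out $b\mathsf{P}^{\phi}a$; by the definition of $\mathsf{P}^{\phi}$ this means precisely $\ChoiProb(b|\{a,b\})\le\phi$. Hence the inequality holds for every binary problem, $\tilde{\AttFilter}$ satisfies both assumptions, and $(\succ,\tilde{\AttFilter})$ is the desired representation. The single point demanding care is pinning down that the triangular rule vanishes on the singleton of the $\succ$-better alternative and places mass $\ChoiProb(a|\{a,b\})$ on the full set; once this is established, the equivalence ``$a\succ b$ is consistent with the extension of $\mathsf{P}^{\phi}$'' $\Leftrightarrow$ ``$\ChoiProb(b|\{a,b\})\le\phi$'' makes Assumption~\ref{Assumption: attentive at binaries} automatic, so no genuinely new construction beyond Theorem~\ref{Theorem: Characterization} is needed.
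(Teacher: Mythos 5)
Your proposal is correct and follows essentially the same route as the paper's proof: extend the (acyclic) transitive closure of $\mathsf{P}^{\phi}\cup\mathsf{P}$ to a complete preference $\succ$, reuse the triangular attention rule from Theorem \ref{Theorem: Characterization} (which already handles representation and Assumption \ref{Assumption: monotonicity}), and verify Assumption \ref{Assumption: attentive at binaries} on binary sets by reducing it algebraically to $\ChoiProb(b|\{a,b\})\leq\phi$, which holds because $b\mathsf{P}^{\phi}a$ would contradict $a\succ b$. The only difference is that you spell out the necessity direction, which the paper dismisses as trivial.
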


For $ \phi<1$, the model characterized by Theorem \ref{Theorem:monotonicity + binary attention} has a higher predictive power (i.e., empirical content) compared to the model characterized by Theorem \ref{Theorem: Characterization}. Hence the model will fail to retain some of its explanatory power. For example,  Example \ref{example:full_revelation_general} with  $\lambda_a, \lambda_b, \lambda_c < 1-\phi$ is outside of the model given here. 

Under the assumption $\phi=1/2$ and  $\pi(a|\{a,b\}) \neq 1/2$ for all $a, b$,  Theorem \ref{Theorem:monotonicity + binary attention} yields that our framework reveals a unique preference while it allows regularity violation. 

\begin{remark}[Acyclic Stochastic Transitivity]
We would like to highlight a close connection between acyclicity of $\mathsf{P}^{\phi}\cup \mathsf{P}$ and the acyclic stochastic transitivity (AST) introduced by \cite{Fishburn_1973}. The model characterized by Theorem \ref{Theorem:monotonicity + binary attention} satisfies a weaker version of AST:
\[\ChoiProb( a_1 |\{a_1,a_2\}) > \phi,   \cdots,  \ChoiProb(a_{k-1}|\{a_{k-1},a_k\})>\phi  \text{ imply } \ChoiProb(a_1|\{a_1,a_k\}) \leq \phi.\]
We call this condition $\phi$-acyclic stochastic transitivity ($\phi$-AST). Note that $\frac{1}{2}$-AST is equivalent to AST. If we only consider binary choice probabilities, acyclicity of $\mathsf{P}^{\phi}\cup \mathsf{P}$ becomes equivalent to $\phi$-AST. Otherwise, our condition is stronger than $\phi$-AST. 
\end{remark}

Now we discuss the econometric implementation. Recall from Section \ref{section:econometrics} that, to test if a specific preference ordering is compatible with the observed (identifiable) choice rule and the monotonicity assumption, we first construct a triangular attention rule and then test whether the triangular attention rule satisfies Assumption \ref{Assumption: monotonicity}. This is formally justified in the proof of Theorem \ref{Theorem: moment inequalities}. 

This line of reasoning can be naturally extended to accommodate Assumption \ref{Assumption: attentive at binaries} in our econometric implementation. Again, the researcher constructs a triangular attention rule based on a specific preference ordering and the identifiable choice rule. She then tests whether the triangular attention rule satisfies Assumption \ref{Assumption: monotonicity} and \ref{Assumption: attentive at binaries}. This is formally justified in the proof of Theorem \ref{Theorem:monotonicity + binary attention}. For testing, only minor changes have to be made when constructing the matrix $\mathbf{R}_\succ$. The precise construction is given in Algorithm \ref{Algorithm: constraints binary}.

\begin{algorithm}
\caption{Construction of $\bR_\succ$.}
\begin{algorithmic} 
\REQUIRE Set a preference $\succ$.
\STATE $\bR_\succ \leftarrow$ empty matrix
\FOR{$S$ in $\mathcal{X}$} 
\FOR{$a$ in $S$} \FOR{$b\prec a$ in $S$}
\STATE $\bR_\succ \leftarrow$ add row corresponding to $\ChoiProb(b|S)-\ChoiProb(b|S-a)\leq 0$.
\ENDFOR\ENDFOR
\IF{$S=\{a,b\}$ is binary and $b\prec a$}
\STATE $\bR_\succ \leftarrow$ add row corresponding to $\frac{1-\phi}{\phi}\ChoiProb(b|S)-\ChoiProb(a|S)\leq 0$
\ENDIF
\ENDFOR

\end{algorithmic}\label{Algorithm: constraints binary} 
\end{algorithm}

We now revisit Example \ref{Example: construction of R} to illustrate what additional (identifying) restrictions are imposed by Assumption \ref{Assumption: attentive at binaries}.

\begin{example}[Example \ref{Example: construction of R}, Continued]\label{Example: construction of R 2}
	Recall that there are three alternatives, $a$, $b$ and $c$ in $X$, and the choice rule is represented by a vector in $\mathbb{R}^9$. For the preference $b\succ a\succ c$, the matrix $\bR_{b\succ a\succ c}$ contains three restrictions if only Assumption \ref{Assumption: monotonicity} is imposed. With our new restriction on the attention rule for binary choice problems, $\bR_{b\succ a\succ c}$ is further augmented:
	\[
	\bR_{b\succ a\succ c} = \begin{bmatrix}
	1& 0& 0& 0& 0& -1&  0& 0& 0\\
	0& 0& 1& 0& 0&  0&  0& 0& -1\\
	0& 0& 1& 0& 0&  0& -1& 0& 0\\
	\hline
	0& 0& 0& \frac{1-\phi}{\phi}& -1&  0&  0& 0& 0\\
	0& 0& 0& 0& 0&  -1&  \frac{1-\phi}{\phi}& 0& 0\\
	0& 0& 0& 0& 0&  0&  0& -1& \frac{1-\phi}{\phi}
	\end{bmatrix},
	\]	
where the first three rows corresponding to restrictions imposed by Assumption \ref{Assumption: monotonicity}, and the last three rows captures our new Assumption \ref{Assumption: attentive at binaries}. 
\end{example}

Assumption \ref{Assumption: attentive at binaries} improves considerably the empirical content of our benchmark RAM (Assumption \ref{Assumption: monotonicity}). However, this assumption is just one of many possible assumptions that could be used in addition to our general RAM. The main takeaway is that our proposed RAM offers a baseline for specific, empirically relevant models of choice under random limited attention. In Section \ref{section:simulations} we compare using simulations the empirical content of our benchmark RAM, which employs only Assumption \ref{Assumption: monotonicity}, and the model that incorporates Assumption \ref{Assumption: attentive at binaries} as well.

\subsection{Random Attention Filter \label{section:RA filter}}

We now consider random attention filters, which are one of the motivating examples of monotonic attention rules. Recall from Section \ref{subsection:monotone} that an attention filter is a deterministic attention rule that satisfies Assumption \ref{Assumption: monotonicity}, and a random attention filter is a convex combination of attention filters, and hence a random attention filter will also satisfy Assumption \ref{Assumption: monotonicity}. For example, the same individual might be utilizing different platforms during her Internet search. Each platform yields a different attention filter, and the usage frequency of each platform is equal to the weight of that attention filter. Random attention filters also give a different interpretation of our model. 

The set of all random attention filters is a strict subset of monotonic attention rules. This is not surprising given that the class of monotonic attention rules is very large. What is (arguably) surprising is the following fact that we are able to show: if $(\pi,\succ,\mu)$ is a RAM with $\mu$ being a monotonic attention rule, there exists a random attention filter $\mu'$ such that $(\pi,\succ,\mu')$ is still a RAM (see Remark \ref{Remark: Triangular Random Attention Filter Representation}). Before presenting this result, however, we observe that $\mu$ and $\mu'$ need not be the same, which means that there are monotonic attention rules that cannot be written as a convex combination of attention filters. 

\begin{example}\label{example: ext_attn} Let $X=\{a_1,a_2,a_3,a_4\}$. Consider a monotonic attention rule $\AttFilter$ such that (i) $\AttFilter(T|S)$ is either $0$ or $0.5$, (ii) $\AttFilter(T|S)=0$ if $|T|>1$, and (iii) if $\AttFilter(\{a_j\}|S)=0$ and $k<j$ then $\AttFilter(\{a_k\}|S)=0$. Then we must have $\AttFilter(\{a_3\}|\{a_1,a_2,a_3,a_4\})=\AttFilter(\{a_4\}|\{a_1,a_2,a_3,a_4\})=0.5$. We now show that $\AttFilter$ is not a random attention filter. 

Suppose $\AttFilter$ can be written as a linear combination of attention filters. Then $\AttFilter(\{a_3\}|\{a_1,a_2,a_3,a_4\})=\AttFilter(\{a_4\}|\{a_1,a_2,a_3,a_4\})=0.5$ implies that only attention filters for which $\Gamma(\{a_1,a_2,a_3,a_4\})=\{a_3\}$ or $\Gamma(\{a_1,a_2,a_3,a_4\})=\{a_4\}$ must be assigned positive probability. On the other hand, $\AttFilter(\{a_2\}|\{a_1,a_2,a_3\})=0.5$ and $\AttFilter(\{a_2\}|\{a_1,a_2,a_4\})=0.5$ imply that for all $\Gamma$ which are assigned positive probability $\Gamma(\{a_1,a_2,a_3\})=\{a_2\}$ whenever $\Gamma(\{a_1,a_2,a_3,a_4\})=\{a_4\}$ and $\Gamma(\{a_1,a_2,a_4\})=\{a_2\}$ whenever $\Gamma(\{a_1,a_2,a_3,a_4\})=\{a_3\}$. To see this, notice that the attention filter property implies $\Gamma(\{a_1,a_2,a_3\})=\{a_3\}$ for all $\Gamma$ with $\Gamma(\{a_1,a_2,a_3,a_4\})=\{a_3\}$ and $\Gamma(\{a_1,a_2,a_4\})=\{a_4\}$ for all $\Gamma$ with $\Gamma(\{a_1,a_2,a_3,a_4\})=\{a_4\}$. But then it must be the case that $\Gamma(\{a_1,a_2\})=\{a_2\}$ for all $\Gamma$ which are assigned positive probability, or that $\AttFilter(\{a_2\}|\{a_1,a_2\})=1$, a contradiction.
\end{example}

We now show that if we restrict our attention to a certain type of monotonic attention rules, then we can show that within that class every attention rule is a random attention filter (i.e., convex combination of deterministic attention filters). Let $\mathcal{MT}({\succ})$ denote the set of all attention rules that are both monotonic (Assumption \ref{Assumption: monotonicity}) and triangular with respect to $\succ$ (Definition \ref{Definition: appendix, triangular} in the Appendix), and let $\mathcal{AF}({\succ})$ denote all attention filters that are triangular with respect to $\succ$. We are now ready to state the main result of this section. 

\begin{thm}[Random Attention Filter]
\label{Theorem: Random Model}
For any $\AttFilter \in \mathcal{MT}({\succ})$, there exists a probability law $\psi$ on $\mathcal{AF}(\succ)$ such that for any $S\in \mathcal{X}$ and $T\subset S$
$$\AttFilter(T|S)=\sum_{\Gamma\in \mathcal{AF}({\succ})}\Indicator(\Gamma(S)=T)\cdot \psi(\Gamma).$$
\end{thm}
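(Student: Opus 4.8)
The plan is to recognize $\mathcal{MT}(\succ)$ as a compact convex polytope and prove the claim through its extreme points. First I would coordinatize a triangular rule by its ``top considered alternative.'' For $a\in S$ write $L(a)=\{x\in X:\ x\preceq a\}$ (the lower contour set $L_{k,\succ}$ with $a=a_{k,\succ}$), and for $\AttFilter\in\mathcal{MT}(\succ)$ set $p_S(a):=\AttFilter(L(a)\cap S\mid S)$. Since the $\succ$-best element of $L(a)\cap S$ is $a$ itself, distinct $a\in S$ produce distinct sets $L(a)\cap S$, so $\AttFilter\mapsto(p_S(a))_{S,a}$ is a bijection between $\mathcal{MT}(\succ)$ and the vectors satisfying $p_S(a)\ge 0$ and $\sum_{a\in S}p_S(a)=1$ for every $S$. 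I would then check that, for triangular rules, Assumption~\ref{Assumption: monotonicity} is equivalent to the linear inequalities
\[
p_S(a)\ \le\ p_{S-b}(a)\qquad\text{for all }S\in\mathcal{X},\ a,b\in S,\ b\succ a.
\]
Indeed, $\AttFilter(T\mid S)>0$ forces $T=L(a)\cap S$ for $a$ its best element, and every $c\in S-T$ then satisfies $c\succ a$ with $L(a)\cap S=L(a)\cap(S-c)$, so the only nonvacuous instances of $\AttFilter(T\mid S)\le\AttFilter(T\mid S-c)$ are exactly the displayed ones. Let $\mathcal{P}$ be the resulting set; its coordinates lie in $[0,1]$ and it is cut out by finitely many linear equalities and inequalities, hence it is a compact convex polytope.

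Since $\mathcal{P}$ is a compact convex polytope it equals the convex hull of its finitely many extreme points, so the theorem follows once I show every extreme point is $\{0,1\}$-valued. A $\{0,1\}$-valued point of $\mathcal{P}$ has, by the simplex constraints, exactly one $a$ with $p_S(a)=1$ in each block; call it $m^*(S)$, defining a deterministic triangular rule $\Gamma$ with $\Gamma(S)=L(m^*(S))\cap S$. The monotonicity inequalities then read: if $m^*(S)=a$ and $b\succ a$ then $p_{S-b}(a)\ge 1$, i.e.\ $m^*(S-b)=m^*(S)$ --- precisely the attention-filter property, so $\Gamma\in\mathcal{AF}(\succ)$; conversely each $\Gamma\in\mathcal{AF}(\succ)$ yields such an integral point. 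Thus the integral points of $\mathcal{P}$ are exactly the elements of $\mathcal{AF}(\succ)$, and once extreme points are known to be integral, writing any $\AttFilter\in\mathcal{MT}(\succ)$ as $\sum_i\lambda_i v_i$ over extreme points $v_i\leftrightarrow\Gamma_i$ and collecting weights gives $\psi(\Gamma)=\sum_{i:\Gamma_i=\Gamma}\lambda_i$, which translates back to $\AttFilter(T\mid S)=\sum_{\Gamma\in\mathcal{AF}(\succ)}\Indicator(\Gamma(S)=T)\,\psi(\Gamma)$.

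The main obstacle is therefore proving that extreme points of $\mathcal{P}$ are integral. I would argue by contradiction via a feasible perturbation. Suppose an extreme point $p$ has a fractional coordinate; then some block $S$ contains at least two fractional coordinates, since a single fractional entry cannot complete an integer sum to $1$. Consider the active constraints at $p$: the block equalities and those monotonicity inequalities holding with equality, $p_S(a)=p_{S-b}(a)$. I would construct a nonzero direction $\delta$, supported on fractional coordinates and lying in the kernel of all active constraints --- concretely by locating a cycle in the bipartite-type graph whose nodes are blocks and alternatives and whose edges are the binding difference constraints together with within-block links, alternating signs so that $\sum_{a\in S}\delta_S(a)=0$ on each touched block and $\delta_S(a)=\delta_{S-b}(a)$ across each binding difference. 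Such a $\delta$ exists because the fractional coordinates outnumber the independent active constraints. Then $p\pm\varepsilon\delta\in\mathcal{P}$ for small $\varepsilon>0$ (strict inequalities tolerate small perturbations, binding ones are preserved, and block sums remain $1$), contradicting extremality. Equivalently, one can show that the constraint matrix --- a partition (simplex) block stacked on a $\pm1$ difference (network) block --- is totally unimodular, so that every vertex of $\mathcal{P}$ is integral. Either route closes the argument, and verifying the cycle/kernel construction (respectively the total unimodularity) is where the real work lies.
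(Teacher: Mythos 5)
Your overall architecture is the same as the paper's own proof: the paper also establishes the theorem by showing that the extreme points of the compact convex set $\mathcal{MT}(\succ)$ are exactly $\mathcal{AF}(\succ)$ and then invoking Minkowski's theorem, and your preliminary steps --- the coordinatization $p_S(a)=\mu(L(a)\cap S\,|\,S)$, the observation that for triangular rules Assumption \ref{Assumption: monotonicity} reduces to the inequalities $p_S(a)\le p_{S-b}(a)$ for $b\succ a$, and the identification of the $\{0,1\}$-valued points of $\mathcal{P}$ with $\mathcal{AF}(\succ)$ --- are all correct. The gap is exactly the step you defer: integrality of the extreme points is the entire content of the theorem, and neither of your two routes closes it. The counting claim behind the perturbation route (``the fractional coordinates outnumber the independent active constraints'') is not a reason but a restatement of what must be proved: at any vertex the active constraints have full rank, so what you need is precisely that their restriction to the directions supported on the fractional coordinates is rank-deficient, and nothing in the proposal establishes this.

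The total-unimodularity route is provably unavailable. Take $X=\{a_1,a_2,a_3,a_4\}$ with $a_1\succ a_2\succ a_3\succ a_4$ (write $k$ for $a_k$), order the columns as $p_{\{2,3,4\}}(2)$, $p_{\{2,3,4\}}(3)$, $p_{\{3,4\}}(3)$, $p_{\{3,4\}}(4)$, $p_{\{1,3,4\}}(4)$, $p_{\{1,2,3,4\}}(4)$, $p_{\{1,2,3,4\}}(2)$, and take the seven rows given by the block equalities for $\{2,3,4\}$, $\{3,4\}$, $\{1,2,3,4\}$ and the four legitimate monotonicity rows $p_{\{2,3,4\}}(3)-p_{\{3,4\}}(3)$, $p_{\{1,3,4\}}(4)-p_{\{3,4\}}(4)$, $p_{\{1,2,3,4\}}(4)-p_{\{1,3,4\}}(4)$, $p_{\{1,2,3,4\}}(2)-p_{\{2,3,4\}}(2)$ (the removed elements $2,1,2,1$ are indeed preferred to $3,4,4,2$ respectively). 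Restricted to these columns the submatrix is
\[
\begin{pmatrix}
1 & 1 & 0 & 0 & 0 & 0 & 0 \\
0 & 1 & -1 & 0 & 0 & 0 & 0 \\
0 & 0 & 1 & 1 & 0 & 0 & 0 \\
0 & 0 & 0 & -1 & 1 & 0 & 0 \\
0 & 0 & 0 & 0 & -1 & 1 & 0 \\
0 & 0 & 0 & 0 & 0 & 1 & 1 \\
-1 & 0 & 0 & 0 & 0 & 0 & 1
\end{pmatrix},
\]
and expanding along the first column gives determinant $2$, so the constraint matrix is not totally unimodular for $|X|\geq 4$. The same example shows why ``locate a cycle and alternate signs'' cannot work as stated: preserving a block equality forces the two touched coordinates in that block to receive opposite values, while preserving a binding difference constraint forces equal values, so a direction closes up around a cycle only if the cycle uses an even number of block rows --- the cycle above uses three. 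Moreover, a feasible direction must respect \emph{every} binding difference constraint incident to its support, not only those along a chosen cycle, so perturbations cascade through the subset lattice. Organizing this cascade consistently --- perturbing exactly two coordinates per affected block, in opposite directions, inductively from the smallest choice problems upward --- is precisely what the paper's proof does (the construction of $\mu'$ and $\mu''$ with $\mu=\frac{1}{2}\mu'+\frac{1}{2}\mu''$, supported by the three lemmas in its appendix), and it is the piece your proposal is missing: the polytope is indeed integral, but for structural reasons specific to this problem, not for the generic polyhedral reasons you invoke.
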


\begin{remark}[Triangular Random Attention Filter Representation]\label{Remark: Triangular Random Attention Filter Representation}
Combining this theorem and Corollary \ref{Lemma: appendix, triangular} in the Appendix, we easily reach the following conclusion: If $\ChoiProb$ has a random attention representation $(\succ,\AttFilter)$, then there exists a triangular random attention filter $\mu'$ such that $(\succ,\AttFilter')$ also represents $\ChoiProb$. 
\end{remark}

The proof of Theorem \ref{Theorem: Random Model} is long and hence left to Appendix, but here we provide a sketch of it. First, $\mathcal{MT}({\succ})$ is a compact and convex set, and thus the above theorem can alternatively be stated as follows: The set of extreme points of $\mathcal{MT}({\succ})$ is $\mathcal{AF}(\succ)$. (An attention rule $\AttFilter\in \mathcal{MT}({\succ})$ is an extreme point of $\mathcal{MT}({\succ})$ if it cannot be written as a nondegenerate convex combination of any $\AttFilter',\AttFilter''\in \mathcal{MT}({\succ})$.) Then, Minkowski's Theorem guarantees that every element of $\mathcal{MT}({\succ})$ lies in the convex hull of $\mathcal{AF}(\succ)$.

Obviously, every element of $\mathcal{AF}(\succ)$ is an extreme point of $\mathcal{MT}({\succ})$. We then show that non-deterministic triangular attention rules cannot be extreme points, i.e. given any $\AttFilter \in \mathcal{MT}({\succ}) - \mathcal{AF}(\succ)$ we can construct $\AttFilter',\AttFilter''\in \mathcal{MT}({\succ})$ such that $\AttFilter=\frac{1}{2}\AttFilter'+\frac{1}{2}\AttFilter''$. The key step is to show that both $\AttFilter'$ and $\AttFilter''$ that we construct are monotonic. After this step, we have shown that no $\AttFilter\in \mathcal{MT}({\succ}) - \mathcal{AF}(\succ)$ can be an extreme point, thus concluding the proof.

\section{Simulation Evidence}\label{section:simulations}

This section gives a summary of a simulation study conducted to assess the finite sample properties of our proposed econometric methods. We consider a class of logit attention rules indexed by $\varsigma$:
\begin{align*}
    \AttFilter_{\varsigma}(T|S) &= \frac{w_{T,\varsigma}}{\sum_{T'\subset S} w_{T',\varsigma}},\qquad w_{T,\varsigma} = |T|^{\varsigma},
\end{align*}
where $|T|$ is the cardinality of $T$. Thus the decision maker pays more attention to larger sets if $\varsigma>0$, and pays more attention to smaller sets if $\varsigma<0$. When $\varsigma$ is very small (negative and large in absolute magnitude), the decision maker almost always pays attention to singleton sets, hence nothing will be learned about the underlying preference from the choice data.

Other details on the data generating process used in the simulation study are as follows. First, the grand set $X$ consists of five alternatives, $a_1$, $a_2$, $a_3$, $a_4$ and $a_5$. Without loss of generality, assume the underlying preference is $a_1\succ a_2\succ a_3\succ a_4\succ a_5$. Second, the data consists of choice problems of size two, three, four and five. That is, there are in total $26$ choice problems. Third, given a specific realization of $Y_i$, a consideration set is generated from the logit attention model with $\varsigma=2$, after which the choice $y_i$ is determined by the aforementioned preference. We also report simulation evidence for $\varsigma\in\{0,1\}$ in the Supplemental Appendix. Finally, the observed data is a random sample $\{(y_i,Y_i):\ 1\leq i\leq N\}$, where the effective sample size can be $50$, $100$, $200$, $300$ and $400$. (Effective sample size refers to the number of observations for each choice problem. Because there are $26$ choice problems, the overall sample size is $N\in\{1300, 2600, 5200, 7800, 10400\}$.)

For inference, we employ the procedure introduced in Section \ref{section:econometrics} and test whether a specific preference ordering is compatible with the basic RAM (Assumption \ref{Assumption: monotonicity}). We also incorporate the attentive at binaries assumption introduced in Section \ref{section:binary}. Recall from Assumption \ref{Assumption: attentive at binaries} that $(1-\phi)/\phi$ is a measure of full attention at binaries, and specifying a larger value (i.e., a smaller value of $\phi$) implies that the researcher is more willing to draw information from binary comparisons. Note that with $\phi=1$, imposing Assumption \ref{Assumption: attentive at binaries} does not bring any additional identification power. Before proceeding, we list five hypotheses (preference orderings), and whether they are compatible with our RAM and specific values of $\phi$. 

\begin{center}
\begin{tabular}{c|c|cccccccccc}
\hline
&\multicolumn{11}{c}{$\phi$}\\ 
\cline{2-12}
& 1 & .95 & .90 & .85 & .80 & .75 & .70 & .65 & .60 & .55 & .50\\ \hline
$\mathsf{H}_{0,1}:a_1\succ a_2\succ a_3\succ a_4\succ a_5$ &$\checkmark$&$\checkmark$&$\checkmark$&$\checkmark$&$\checkmark$&$\checkmark$&$\checkmark$&$\checkmark$&$\checkmark$&$\checkmark$&$\checkmark$ \\
$\mathsf{H}_{0,2}:a_2\succ a_3\succ a_4\succ a_5\succ a_1$ &$\checkmark$&$\checkmark$&$\checkmark$&$\checkmark$&$\times$&$\times$&$\times$&$\times$&$\times$&$\times$&$\times$ \\
$\mathsf{H}_{0,3}:a_3\succ a_4\succ a_5\succ a_2\succ a_1$ &$\times$&$\times$&$\times$&$\times$&$\times$&$\times$&$\times$&$\times$&$\times$&$\times$&$\times$ \\
$\mathsf{H}_{0,4}:a_4\succ a_5\succ a_3\succ a_2\succ a_1$ &$\times$&$\times$&$\times$&$\times$&$\times$&$\times$&$\times$&$\times$&$\times$&$\times$&$\times$ \\
$\mathsf{H}_{0,5}:a_5\succ a_4\succ a_3\succ a_2\succ a_1$ &$\times$&$\times$&$\times$&$\times$&$\times$&$\times$&$\times$&$\times$&$\times$&$\times$&$\times$ \\ \hline
\end{tabular}
\end{center}

As can be seen, $\mathsf{H}_{0,1}$ always belongs to the identified set of preferences, as it is the preference ordering used in the underlying data generating process. $\mathsf{H}_{0,2}$, however, may or may not belong to the identified set depending on the value of $\phi$: with $\phi$ close to 0.5, the researcher is confident enough using information from binary comparisons, and she will be able to reject this hypothesis; for $\phi$ close to 1, Assumption \ref{Assumption: attentive at binaries} no longer brings too much additional identification power beyond the monotonic attention assumption, and monotonic attention alone is not strong enough to reject this hypothesis. Indeed, with $\phi=1$ (i.e., Assumption \ref{Assumption: monotonicity} alone), the set of identified preference is $\{ \succ: a_2\succ a_3\succ a_4\succ a_5\}$, which contains $\mathsf{H}_{0,2}$. The other three hypotheses, $\mathsf{H}_{0,3}$, $\mathsf{H}_{0,4}$ and $\mathsf{H}_{0,5}$, do not belong to the identified set even with $\phi=1$.

Overall, our simulation has 5 (different $N$) $\times$ 5 (different preference orderings) $\times$ 11 (different $\phi$) $=275$ designs. For each design, 5,000 simulation repetitions are used, and the five null hypotheses are tested using our proposed method at the 5\% nominal level. Simulation results are summarized in Figure \ref{fig:fig2}. 

\begin{figure}[!ht]
\centering
\subfloat[$\mathsf{H}_{0,1}:a_1\succ a_2\succ a_3\succ a_4\succ a_5$]{
\resizebox{0.4\columnwidth}{!}{\includegraphics{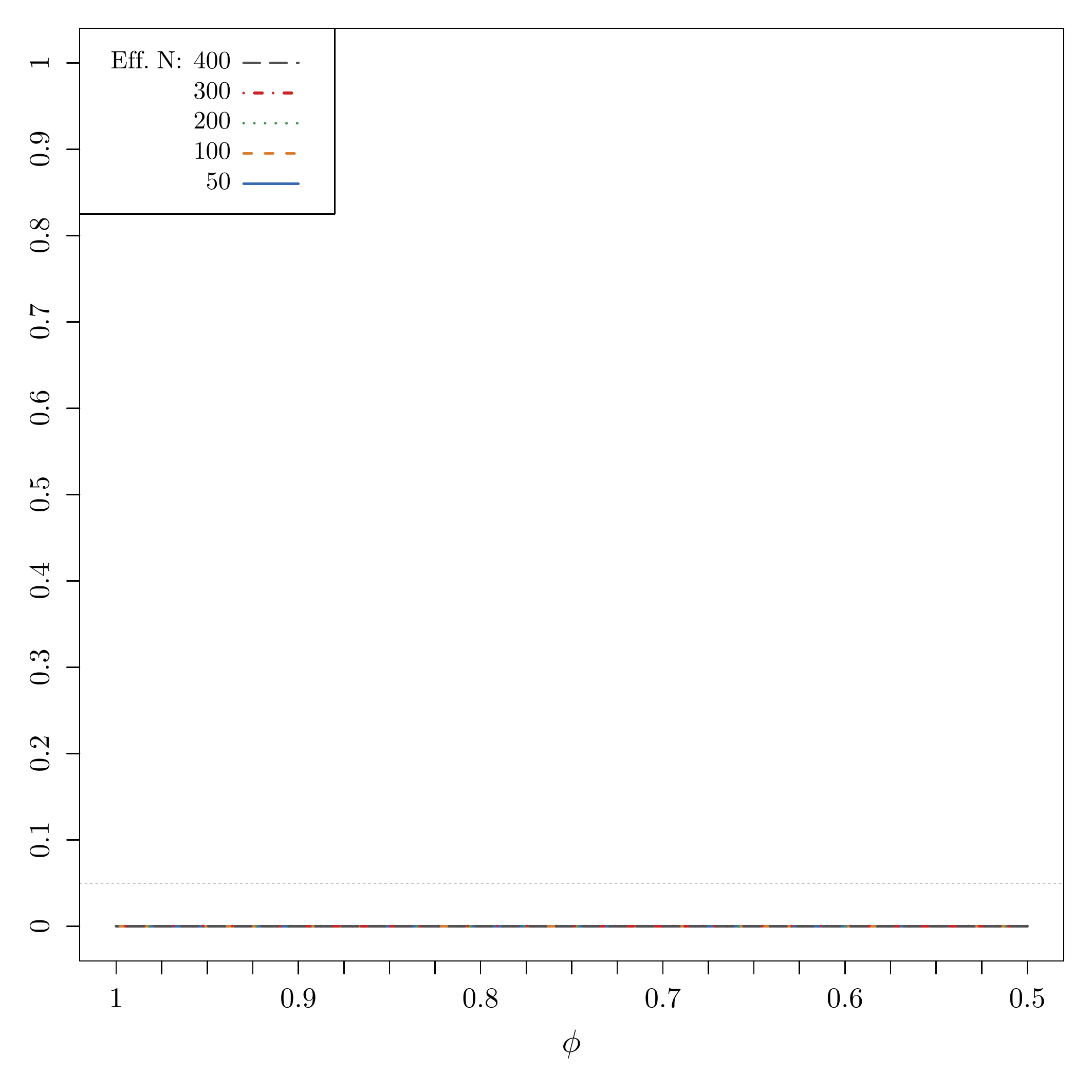}}
}
\subfloat[$\mathsf{H}_{0,2}:a_2\succ a_3\succ a_4\succ a_5\succ a_1$]{
\resizebox{0.4\columnwidth}{!}{\includegraphics{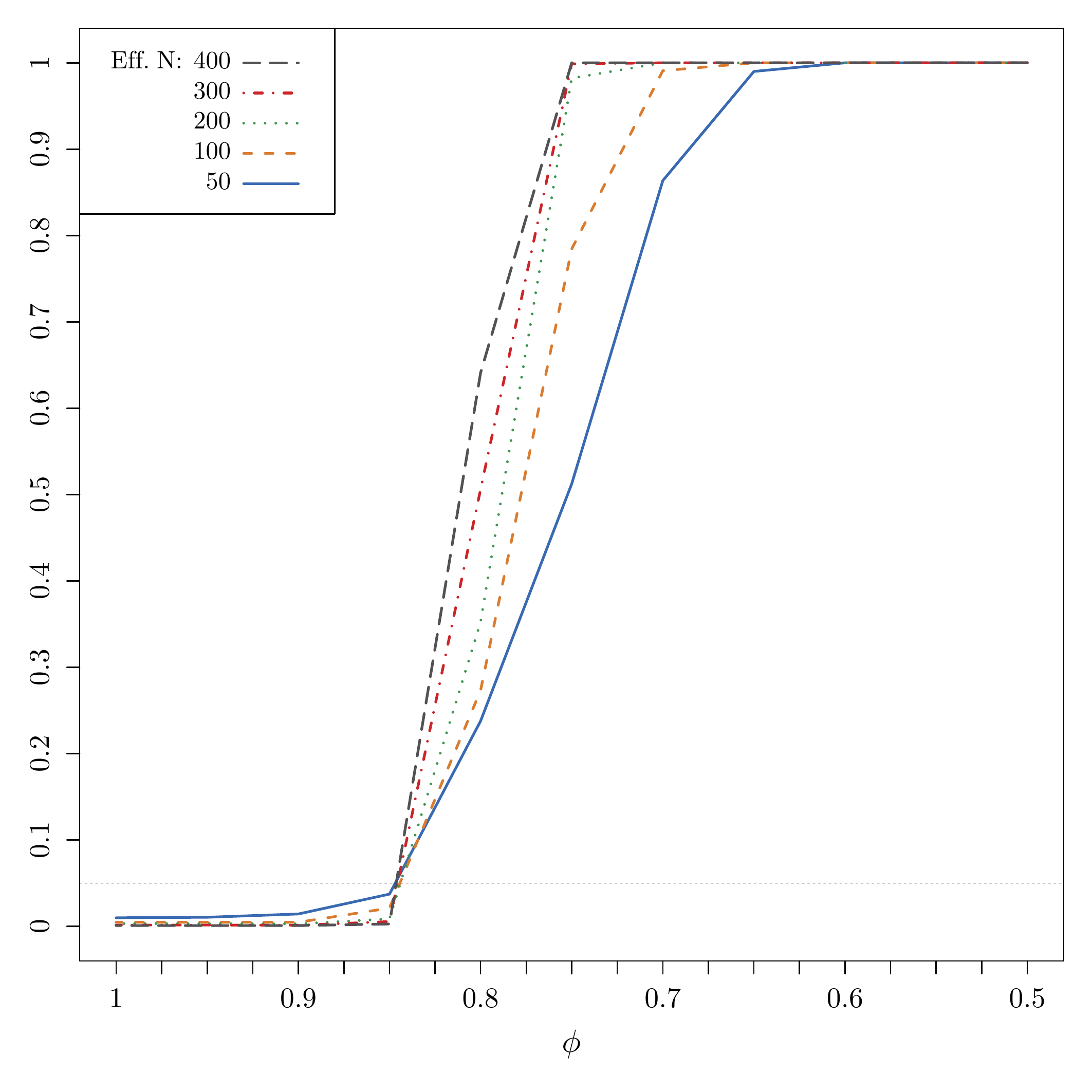}}
}\\
\subfloat[$\mathsf{H}_{0,3}:a_3\succ a_4\succ a_5\succ a_2\succ a_1$]{
\resizebox{0.4\columnwidth}{!}{\includegraphics{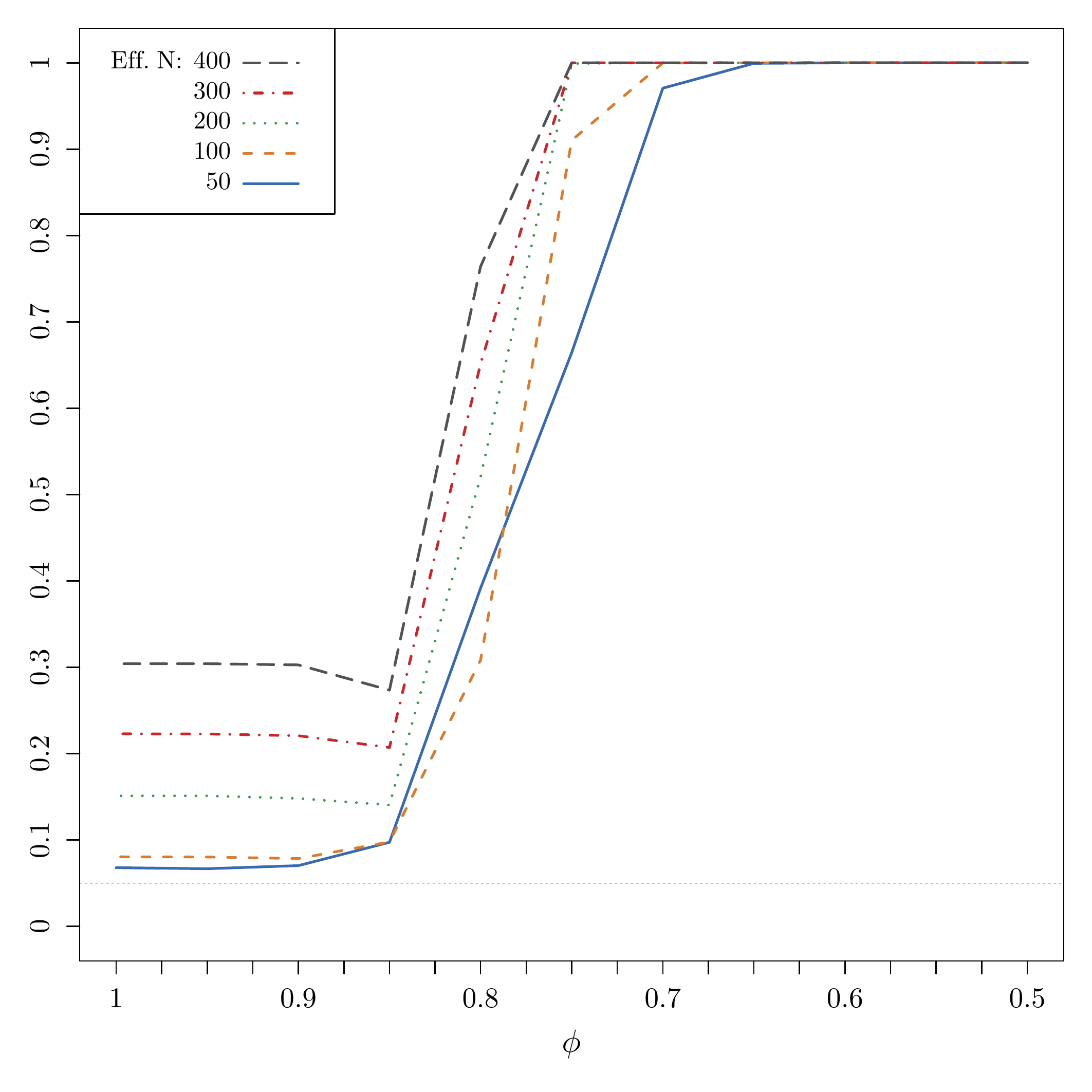}}
}
\subfloat[$\mathsf{H}_{0,4}:a_4\succ a_5\succ a_3\succ a_2\succ a_1$]{
\resizebox{0.4\columnwidth}{!}{\includegraphics{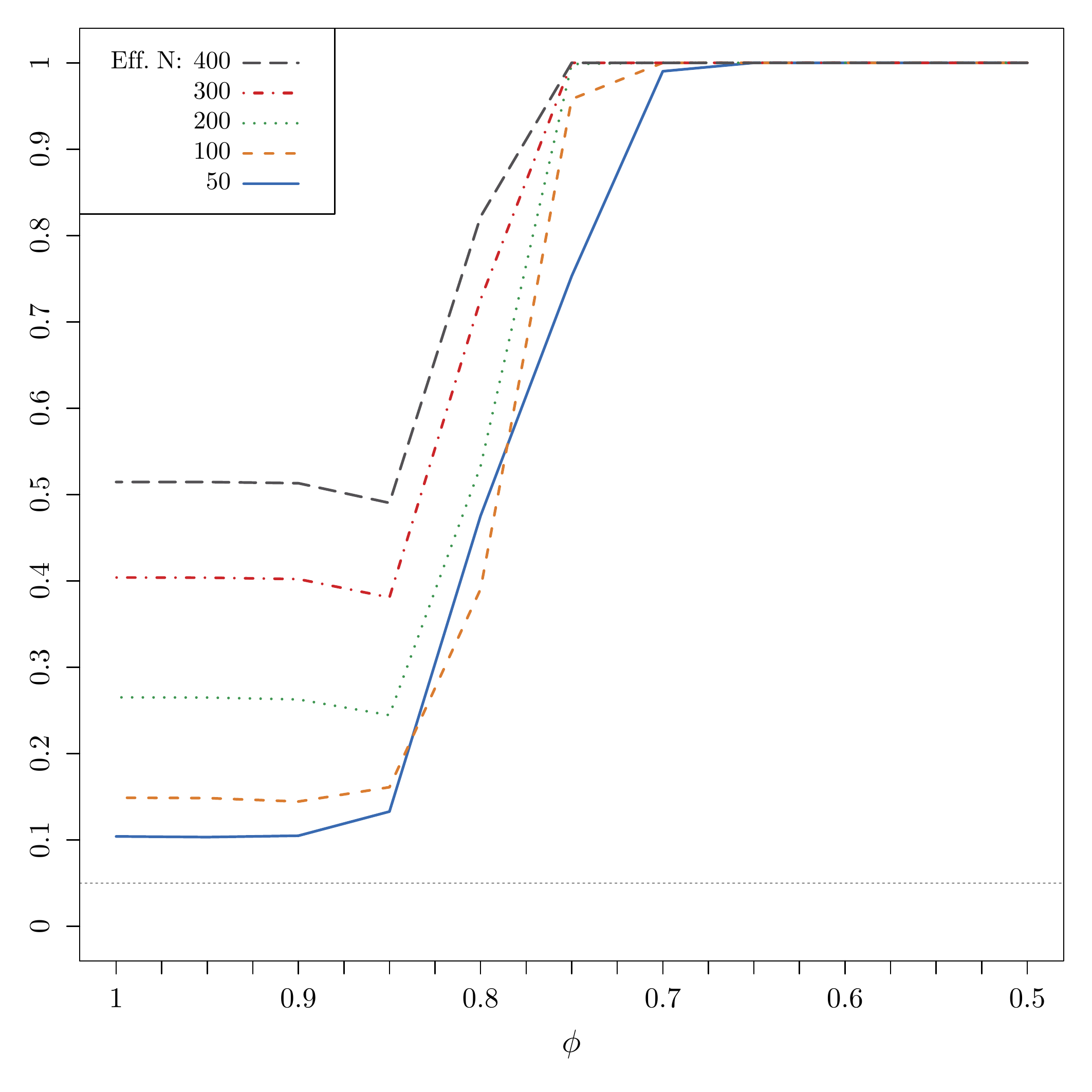}}
}\\
\subfloat[$\mathsf{H}_{0,5}:a_5\succ a_4\succ a_3\succ a_2\succ a_1$]{
\resizebox{0.4\columnwidth}{!}{\includegraphics{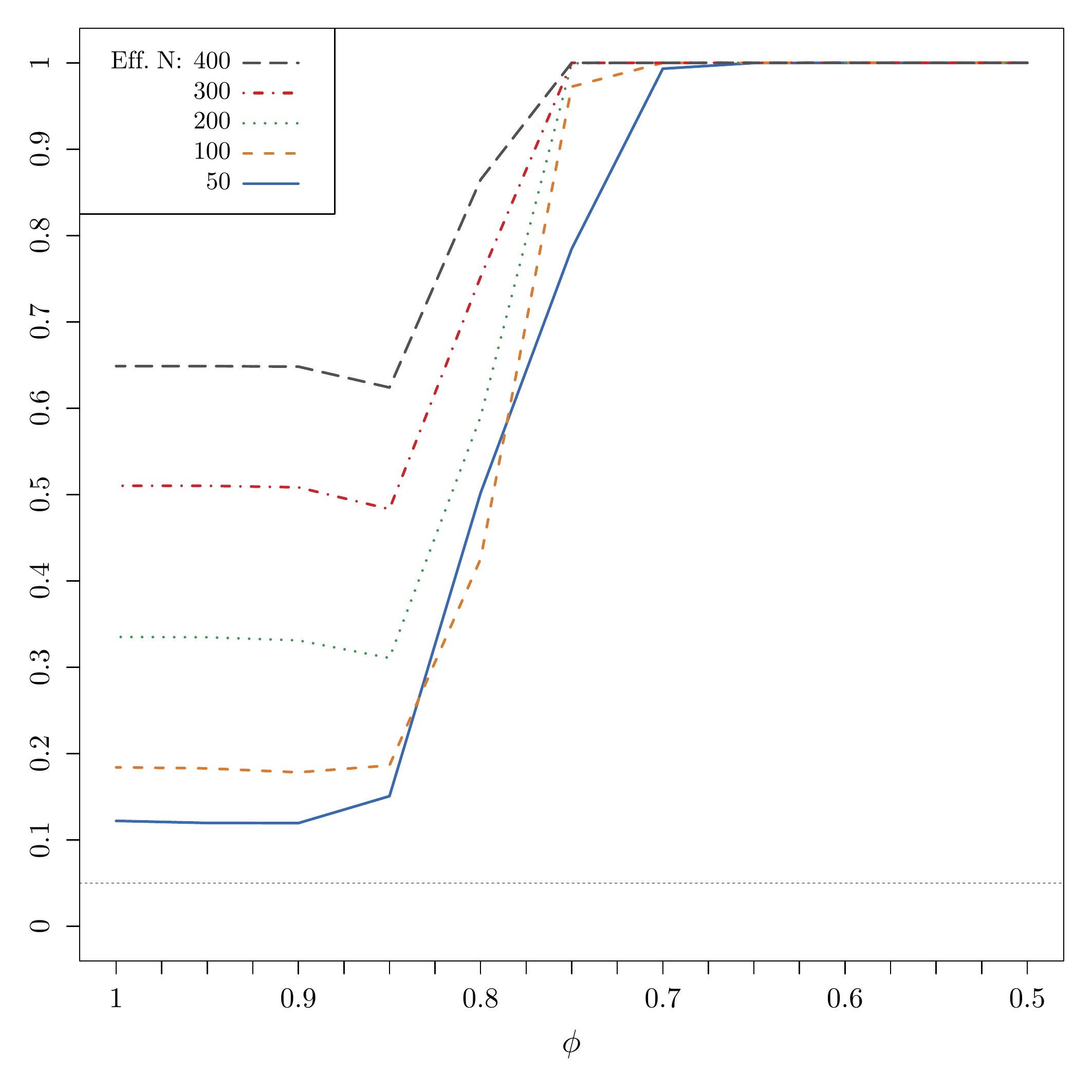}}
}\ \ 
\begin{minipage}{0.4\columnwidth}\footnotesize
\vskip-0.17\textheight Shown in the figure are empirical rejection probabilities testing the five null hypothesis through 5,000 simulations, with nominal size $0.05$. Logit attention rule with $\varsigma=2$ is used, as described in the text. For each simulation repetition, five effective sample sizes are considered $50$, $100$, $200$, $300$ and $400$. 
\end{minipage}
\caption{Empirical Rejection Probabilities}\label{fig:fig2}
\end{figure}

We first focus on $\mathsf{H}_{0,1}$ (panel a). As this preference ordering is compatible with our RAM, one should expect the rejection probability to be less than the nominal level. Indeed, the rejection probability is far below 0.05: this illustrates a generic feature of any (reasonable) procedure for testing moment inequalities---to maintain uniform asymptotic size control, empirical rejection probability is below the nominal level when the inequalities are far from binding. Next consider $\mathsf{H}_{0,2}$ (panel b). For $\phi$ lager than 0.85, the rejection probability is below the nominal size, which is compatible with our theory, because this preference belongs to the identified set when only Assumption \ref{Assumption: monotonicity} is imposed. With smaller $\phi$, the researcher relies more heavily on information from binary comparisons/choice problems, and she is able to reject this hypothesis much more frequently. This demonstrates how additional restrictions on the attention rule can be easily accommodated by our basic RAM, which in turn can bring additional identification power. The other three hypotheses (panel c, d and e) are not compatible with our RAM, and we do see that the rejection probability is much larger than the nominal size even for $\phi=1$, showing that even our basic RAM has non-trivial empirical content in this case.

\section{Conclusion\label{section:conclusion}}

We introduced a limited attention model allowing for a general class of monotonic (and possibly stochastic) attention rules, which we called a Random Attention Model (RAM). We showed that this model nests several important recent contributions in both economic theory and econometrics, in addition to other classical results from decision theory. Using our RAM, we obtained a testable theory of revealed preferences and developed partial identification results for the decision maker's unobserved strict preference ordering. Our results included a precise constructive characterization of the identified set for preferences, as well as uniformly valid inference methods based on that characterization. Furthermore, we showed how additional nonparametric restriction can be easily incorporated into RAM to obtain tigher empirical implications, and more powerful accompying econometric procedures. We found good finite sample performance of our econometric methods in a simulation experiment. Last but not least, we provide the general-purpose \texttt{R} software package \texttt{ramchoice}, which allows other researchers to easily employ our econometric methods in empirical applications.

\bigskip

\appendix

\renewcommand{\thedefn}{A.\arabic{defn}}\setcounter{defn}{0}
\renewcommand{\thethm}{A.\arabic{thm}}\setcounter{thm}{0}
\renewcommand{\thelem}{A.\arabic{lem}}\setcounter{lem}{0}
\renewcommand{\thecoro}{A.\arabic{coro}}\setcounter{coro}{0}
\renewcommand{\theprop}{A.\arabic{prop}}\setcounter{prop}{0}
\renewcommand{\theexampletmp}{A.\arabic{exampletmp}}\setcounter{exampletmp}{0}
\renewcommand{\theremarktmp}{A.\arabic{remarktmp}}\setcounter{remarktmp}{0}
\renewcommand{\thesubsection}{A.\arabic{subsection}}

\singlespacing

\section*{Appendix: Omitted Proofs\label{appendix: proofs}}

This appendix collects proofs that are omitted from the main text to improve the exposition. 

\subsection{Proof of Theorem \ref{Theorem: Characterization}}

Suppose $\ChoiProb$ has a random attention representation $(\succ , \AttFilter )$. Then Lemma \ref{Lemma: Removing Dominating} implies that $\succ $ must include $\mathsf{P}$ so $\mathsf{P}$ must be acyclic. 

For the other direction, suppose that $\mathsf{P}$ has no cycle. Pick any preference $\succ$ that includes $\mathsf{P}_\mathtt{R}$ and enumerate all alternatives with respect to $\succ$: $a_{1,\succ} \succ a_{2,\succ} \succ \cdots \succ a_{\dimUniverse, \succ}$. Let $\{L_{k,\succ}:\ 1\leq k\leq K\}$ be the corresponding lower contour sets (Definition \ref{Definition: appendix, triangular}). Then we specify $\tilde{\AttFilter}$ as
\begin{align*}
\tilde{\mu}(T|S) &= \begin{cases}
\pi(a_{k,\succ}|S) &\qquad \text{if $a_{k,\succ}\in S$ and $T = L_{k,\succ}\cap S$}\\
0 &\qquad \text{otherwise}
\end{cases}.
\end{align*}
It is trivial to verify that $(\succ,\tilde{\mu})$ represents $\pi$, since $(\succ,\tilde{\mu})$ induces the following choice rule:
\begin{align*}
\sum_{T\subset S} \Indicator[\text{$a$ is $\succ$-best in $T$}]\tilde{\mu}(T|S) &= \sum_{a_{k,\succ}\in S} \Indicator[\text{$a$ is $\succ$-best in $L_{k,\succ}\cap S$}]\tilde{\mu}(L_{k,\succ}\cap S|S)\\
&= \sum_{a_{k,\succ}\in S} \Indicator[\text{$a$ is $\succ$-best in $L_{k,\succ}\cap S$}]\pi(a_{k,\succ}|S)\\
&= \sum_{a_{k,\succ}\in S} \Indicator[a = a_{k,\succ}]\pi(a_{k,\succ}|S)\\
&= \pi(a|S),
\end{align*}
which is the same as $\pi$. For the first equality, we use the definition that a triangular attention rule only puts weights on lower contour sets; for the second equality, we apply the definition/construction of $\tilde{\mu}$; the third equality follows from the definition of lower contour sets. 

Now we verify that $\tilde{\mu}$ satisfies Assumption \ref{Assumption: monotonicity}. Assume this is not the case, then it means there exist some $S$, $a_{k,\succ},a_{\ell,\succ} \in S$, such that (i) $L_{k,\succ}\cap S = L_{k,\succ}\cap (S - a_{\ell,\succ})$, and (ii) $\tilde{\mu}(L_{k,\succ}\cap S|S) > \tilde{\mu}(L_{k,\succ}\cap (S - a_{\ell,\succ})|S - a_{\ell,\succ})$. By the definition of lower contour sets, (i) implies $a_{\ell,\succ}\succ a_{k,\succ}$. Then (ii) implies
\begin{align*}
\tilde{\mu}(L_{k,\succ}\cap S|S) &= \pi(a_{k\succ}|S) > \tilde{\mu}(L_{k,\succ}\cap (S - a_{\ell,\succ})|S - a_{\ell,\succ}) = \pi(a_{k,\succ}|S - a_{\ell,\succ}).
\end{align*}
The above, however, implies that $a_{k,\succ}\mathsf{P} a_{\ell,\succ}$, which contradicts the implication of (i) that $a_{\ell,\succ}\succ a_{k,\succ}$. This closes the proof.

\begin{remark}
The previous proof has a nice implication that, a choice rule can be represented by a monotonic attention rule if and only if it can also be represented by a monotonic triangular attention rule. Formally, if $\ChoiProb$ has a random attention representation, $(\succ,\AttFilter)$, then $(\succ,\tilde{\AttFilter})$ also represents $\ChoiProb$ where $\tilde{\AttFilter}$ is monotonic and triangular with respect to $\succ$. Hence, we can focus on monotonic triangular attention rules without loss of generality. This is formally summarized in Corollary \ref{Lemma: appendix, triangular}.
\end{remark}

\subsection{Proof of Theorem \ref{Theorem: valid critical values}}\label{Appendix: proof of critical value}

See Section SA.4.1 of the Supplemental Appendix. 

\subsection{Proof of Theorem \ref{Theorem:monotonicity + binary attention}}

The ``only if'' part is trivial and is omitted. We illustrate the ``if'' part. Assume that $\mathsf{P}^{\phi}\cup \mathsf{P}$ has no cycle (or equivalently, its transitive closure $(\mathsf{P}^{\phi}\cup \mathsf{P})_{\mathtt{R}}$ has no cycle), then there exists some preference ordering that embeds $\mathsf{P}^{\phi}\cup \mathsf{P}$. Fix one such preference $\succ$. With the same argument used in the proof of Theorem \ref{Theorem: Characterization}, we can construct a triangular attention rule $\mu(T|S)$ and show that it satisfies Assumption \ref{Assumption: monotonicity}. 

We then show that $\mu(T|S)$ satisfies Assumption \ref{Assumption: attentive at binaries}. Take binary $S=\{a,b\}$ and assume without loss of generality that $a\succ b$. Then $\mu(\{a,b\}|\{a,b\})=\pi(a|S)$ and $\mu(\{b\}|\{a,b\})=\pi(b|S)$. Violation of Assumption \ref{Assumption: attentive at binaries} implies $\pi(a|\{a,b\})<\frac{1-\phi}{\phi}\pi(b|\{a,b\})$, and equivalently, $\pi(b|\{a,b\})>\phi$. This means that $b\mathsf{P}^{\phi}a$, which violates our definition of $\succ$. 

\subsection{Proof of Theorem \ref{Theorem: Random Model}}

We show that the set of extreme points of $\mathcal{MT}(\succ)$ is $\mathcal{AF}(\succ)$. Clearly, any $\Gamma\in \mathcal{AF}(\succ)$ is an extreme point. Pick a non-deterministic attention rule $\AttFilter\in \mathcal{MT}(\succ)$. We show that $\AttFilter$ cannot be an extreme point. Let $\mathcal{X}_{\AttFilter}\subset \mathcal{X}$ stand for all sets $S\in \mathcal{X}$ for which $\AttFilter(T|S)=1$ for no $T\subset S$. We start by choosing $\epsilon>0$ small enough so that none of the non-binding constraints are affected whenever $\epsilon$ is added to or subtracted from $\AttFilter(T|S)$ for all $T\subset S$ and $S\in \mathcal{X}$. Let $k_{\AttFilter}=\min_{S\in \mathcal{X}_{\AttFilter}}|S|$. Since $\AttFilter$ is not deterministic, such $k_{\AttFilter}$ exists. 

We begin with the following simple observation that given $S$ with $|S|=k_{\AttFilter}$ we can have at most two subsets of $S$ with $\AttFilter(T|S)\in (0,1)$. Moreover, it must be the case that $\AttFilter(S|S)\in (0,1)$. 

\begin{lem}\label{Lemma: appendix, thm6.1}
Let $S$ with $|S|=k_{\AttFilter}$ be given. Then there exist at most two $T\subset S$ such that $\AttFilter(T|S)\in (0,1)$. Furthermore, $\AttFilter(S|S)\in (0,1)$. 
\end{lem}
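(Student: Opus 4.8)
The plan is to exploit two facts in tandem: the rigid combinatorial shape of the support of a triangular rule on a fixed menu, and the fact that $k_\mu$ is the \emph{smallest} cardinality on which $\mu$ randomizes. Fix a set $S$ with $|S|=k_\mu$ on which $\mu$ is non-deterministic (i.e.\ $S\in\mathcal{X}_\mu$; otherwise the ``furthermore'' claim is vacuous), and enumerate its elements in preference order as $b_1\succ b_2\succ\cdots\succ b_m$, where $m=k_\mu$. Because $\mu$ is triangular with respect to $\succ$, every set receiving positive weight under $\mu(\cdot|S)$ has the form $L_{k,\succ}\cap S$, and these are precisely the nested sets $T_j=\{b_{j+1},\dots,b_m\}$ obtained by deleting the top $j$ alternatives, for $0\le j\le m-1$ (so $T_0=S$). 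Hence $\mu(\cdot|S)$ is a distribution supported on the chain $T_{m-1}\subset\cdots\subset T_1\subset T_0=S$.

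First I would record the consequence of minimality: for every $a\in S$ we have $|S-a|=k_\mu-1<k_\mu$, so $S-a\notin\mathcal{X}_\mu$ and $\mu(\cdot|S-a)$ is \emph{deterministic}, concentrated on a single (necessarily triangular) consideration set. This is the lever that turns the monotonicity inequalities into exact $\{0,1\}$ statements.

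The core step uses Assumption~\ref{Assumption: monotonicity}. Fix a proper set $T_j$ in the support with $j\ge 1$; its complement in $S$ is $S-T_j=\{b_1,\dots,b_j\}$. For each $i\le j$, monotonicity gives $\mu(T_j|S)\le\mu(T_j|S-b_i)$, and since $T_j\subseteq S-b_i$ is again a lower-contour set of $S-b_i$ while $\mu(\cdot|S-b_i)$ is deterministic, the right-hand side lies in $\{0,1\}$; as $\mu(T_j|S)>0$ it must equal $1$. Thus $T_j$ is \emph{the} deterministic consideration set of $S-b_i$ for every $i\le j$. Now suppose two distinct proper sets $T_j,T_{j'}$ with $1\le j<j'$ both had positive weight. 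Taking the common index $i=j$ (legitimate since $b_j\in S-T_j$ and $b_j\in S-T_{j'}$), the previous observation forces $\mu(T_j|S-b_j)=1$ \emph{and} $\mu(T_{j'}|S-b_j)=1$, contradicting uniqueness of the deterministic consideration set on $S-b_j$. Hence at most one of $T_1,\dots,T_{m-1}$ can lie in the support. I would stress that $T_0=S$ is exempt from this argument, because $S-T_0=\emptyset$ generates no monotonicity constraint of this form.

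Finally I would assemble the conclusion: the support of $\mu(\cdot|S)$ is contained in $\{S\}$ together with at most one proper $T_j$, giving at most two sets with $\mu(T|S)\in(0,1)$. Because $S\in\mathcal{X}_\mu$, the distribution $\mu(\cdot|S)$ is nondegenerate and cannot be concentrated on a single set; since any two distinct proper sets are excluded, the second support element must be $S$ itself, yielding $\mu(S|S)\in(0,1)$. I expect the delicate point to be the core step — specifically, checking that each candidate proper set $T_j$ remains a valid triangular (lower-contour) consideration set of the reduced menu $S-b_i$, so that the deterministic value $\mu(T_j|S-b_i)\in\{0,1\}$ is actually available, and correctly isolating $T_0=S$ as the one set that escapes the conflict.
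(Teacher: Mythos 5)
Your proof is correct, but it reaches the conclusion by a different mechanism than the paper's. The paper works inside nested submenus: if two proper support sets $T_1\subsetneq T_2\subsetneq S$ both had weight in $(0,1)$, then iterating Assumption \ref{Assumption: monotonicity} down to the menu $T_2$ gives $\AttFilter(T_1|T_2)\geq\AttFilter(T_1|S)>0$ and $\AttFilter(T_2|T_2)\geq\AttFilter(T_2|S)>0$, so both lie in $(0,1)$ and $T_2$ itself would be a randomizing menu of cardinality strictly below $k_{\AttFilter}$ --- a direct contradiction with the minimality of $k_{\AttFilter}$; running this once with three nested sets gives ``at most two,'' and running it again when the larger support set is proper gives $\AttFilter(S|S)\in(0,1)$. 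You instead exploit the flip side of minimality: every one-element-deleted menu $S-b_i$ is \emph{deterministic}, and monotonicity forces any positively weighted proper $T_j$ to be the unique weight-one consideration set of $S-b_i$ for each $i\leq j$, so two distinct proper support sets would both require probability one on the common menu $S-b_j$ --- an impossibility of probability mass rather than a violated minimality. Your route yields a slightly sharper structural fact (at most one proper subset can ever carry weight, and it must coincide with the deterministic consideration set of the reduced menus), while the paper's is more compact, with a single contradiction serving both claims. Two small remarks: the point you flag as delicate is actually automatic, since determinism of $\AttFilter(\cdot|S-b_i)$ alone gives $\AttFilter(T|S-b_i)\in\{0,1\}$ for \emph{every} subset $T$, with no need for $T_j$ to be a lower contour set of the reduced menu; and your reading that the lemma implicitly requires $S\in\mathcal{X}_{\AttFilter}$ is the right one --- the paper's own proof assumes this tacitly, and it is how the lemma is invoked in the proof of Theorem \ref{Theorem: Random Model}.
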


\begin{proof}
Suppose there exist three such subsets: $T_1$, $T_2$, and $T_3$. Since $\AttFilter$ is triangular the subsets which are considered with positive probability can be ordered by set inclusion. Hence, we can without loss of generality assume $T_1\subset T_2\subset T_3$. But then since $\AttFilter$ is monotonic and $T_1\subset T_2\subset S$ it must be that $\AttFilter(T_1|T_2)\in (0,1)$ and $\AttFilter(T_2|T_2)\in (0,1)$. This contradicts the definition of $k_{\AttFilter}$. Hence there can be at most two subsets $T_1$ and $T_2$ with positive probability. The same contradiction appears as long as $T_2\subsetneq S$. Hence, $T_2=S$. 
\end{proof}

Now for all sets $S\in \mathcal{X}_{\AttFilter}$ with $|S|=k_{\AttFilter}$, we define $\AttFilter'$ and $\AttFilter''$ as follows:

$$\AttFilter'(T|S)=\AttFilter(T|S)+\epsilon,$$ $$\AttFilter'(S|S)=\AttFilter(S|S)-\epsilon,$$  and  $$\AttFilter''(T|S)=\AttFilter(T|S)-\epsilon,$$ $$\AttFilter''(S|S)=\AttFilter(S|S)+\epsilon$$
where $T\subsetneq S$ with $\AttFilter(T|S)\in (0,1)$.

Suppose we have defined $\AttFilter'$ and $\AttFilter''$ for all sets with $|S|\leq l$ and let $S$ with $|S|=l+1$ be given. If there exist no $T\subset S$ and $S_T\subset S$ such that $\AttFilter'(T|S_T)\neq \AttFilter''(T|S_T)$ and $\AttFilter(T|S)=\AttFilter(T|S_T)$, then we set $\AttFilter(T|S)=\AttFilter'(T|S)=\AttFilter''(T|S)$ for all $T\subset S$. Otherwise, pick the smallest $T$ for which such $S_T$ exists. If $\AttFilter'(T|S_T)>\AttFilter''(T|S_T)$, then let $\AttFilter'(T|S)=\AttFilter(T|S)+\epsilon$ and $\AttFilter''(T|S)=\AttFilter(T|S)-\epsilon$ and if $\AttFilter'(T|S_T)<\AttFilter''(T|S_T)$, then let $\AttFilter'(T|S)=\AttFilter(T|S)-\epsilon$ and $\AttFilter''(T|S)=\AttFilter(T|S)+\epsilon$. If $T$ is the only set for which such $S_T$ exists, then let $T'$ be the largest set for which $\AttFilter(T'|S)\in (0,1)$. Otherwise $T'$ denotes the other set for which $S_{T'}$ satisfying the description exists. If $\AttFilter'(T|S_T)>\AttFilter''(T|S_T)$, then let $\AttFilter'(T'|S)=\AttFilter(T'|S)-\epsilon$ and $\AttFilter''(T'|S)=\AttFilter(T'|S)+\epsilon$ and if $\AttFilter'(T|S_T)<\AttFilter''(T|S_T)$, then let $\AttFilter'(T'|S)=\AttFilter(T'|S)+\epsilon$ and $\AttFilter''(T'|S)=\AttFilter(T'|S)-\epsilon$. For all other subsets $\AttFilter$, $\AttFilter'$, and $\AttFilter''$ agree. We proceed iteratively.

\begin{lem}
Suppose there exist $T\subset S$ and $S_T\subset S$ such that $\AttFilter'(T|S_T)\neq \AttFilter''(T|S_T)$ and $\AttFilter(T|S)=\AttFilter(T|S_T)$. Then either $T$ is the smallest set in $S$ satisfying the description or we can set $S_T=T$. 
\end{lem}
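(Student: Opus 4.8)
The plan is to trace the difference $\AttFilter'(T|S_T)\neq\AttFilter''(T|S_T)$ back to the base level $k_{\AttFilter}$ at which it first appeared, and then to read off the stated dichotomy from whether that origin equals $T$ itself or a strictly larger set. First I would record two harmless reductions. Since $\AttFilter'(T|S_T)$ and $\AttFilter''(T|S_T)$ are defined before stage $|S|$, we have $S_T\subsetneq S$; and since a zero entry is never perturbed, $\AttFilter'(T|S_T)\neq\AttFilter''(T|S_T)$ forces $T\subset S_T$ together with $\AttFilter(T|S_T)\in(0,1)$. If $T=S_T$ the witness $S_T=T$ is already the given one, so assume $T\subsetneq S_T\subsetneq S$. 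Next define $S^*$ to be a minimal (under inclusion) subset of $S_T$ with $\AttFilter'(T|S^*)\neq\AttFilter''(T|S^*)$. Because every set of size below $k_{\AttFilter}$ is deterministic and hence unperturbed, $|S^*|\geq k_{\AttFilter}$; and if $|S^*|>k_{\AttFilter}$, the construction would have produced the perturbation of $(T|S^*)$ from a strictly smaller witness inside $S^*$, contradicting minimality. Hence $|S^*|=k_{\AttFilter}$, so by the base-level step and Lemma \ref{Lemma: appendix, thm6.1} either $T=S^*$, or $T\subsetneq S^*$ is the unique proper subset of $S^*$ carrying mass $\AttFilter(T|S^*)\in(0,1)$.

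Two facts then drive the argument. First, the construction perturbs an entry $(T|S')$ only when the witness $S''$ it uses satisfies $\AttFilter(T|S')=\AttFilter(T|S'')$; chaining this equality along the perturbation history from $S_T$ down to the origin $S^*$ gives $\AttFilter(T|S_T)=\AttFilter(T|S^*)$, which combined with the hypothesis $\AttFilter(T|S)=\AttFilter(T|S_T)$ yields $\AttFilter(T|S)=\AttFilter(T|S^*)$. Second, monotonicity (Assumption \ref{Assumption: monotonicity}) gives, for any $T_0\subset S^*\subset S$, the inequality $\AttFilter(T_0|S)\leq\AttFilter(T_0|S^*)$, since the elements of $S\setminus S^*$ removed along the way all lie outside $T_0$.

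The dichotomy now follows. If $T=S^*$, then $(T|T)$ is exactly the full-set entry perturbed at the base level, so $\AttFilter'(T|T)\neq\AttFilter''(T|T)$; together with $\AttFilter(T|S)=\AttFilter(T|S^*)=\AttFilter(T|T)$ this shows $(T,T)$ is a valid witness, i.e. we may set $S_T=T$. If instead $T\subsetneq S^*$, then $|T|<k_{\AttFilter}$, so $T$ is deterministic as a full set and $\AttFilter'(T|T)=\AttFilter''(T|T)$; hence $S_T=T$ cannot serve as a witness, and I must show $T$ is the smallest set in $S$ satisfying the description. Suppose some $T_0\subsetneq T$ satisfied it, with witness $S_{T_0}$. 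Arguing as above, $\AttFilter(T_0|S)=\AttFilter(T_0|S_{T_0})\in(0,1)$, so $\AttFilter(T_0|S)>0$. But $T$ is the unique proper subset of $S^*$ with positive mass, so $\AttFilter(T_0|S^*)=0$ (as $T_0\neq T$ and $T_0\neq S^*$), whence the monotonicity squeeze gives $\AttFilter(T_0|S)\leq\AttFilter(T_0|S^*)=0$, a contradiction. Thus no such $T_0$ exists and $T$ is smallest.

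The main obstacle is this minimality step in the case $T\subsetneq S^*$: it is the only place where the base-level structure of Lemma \ref{Lemma: appendix, thm6.1} (that at level $k_{\AttFilter}$ only the full set and a single proper subset receive positive mass) must be combined with the monotonicity squeeze, and it crucially relies on the fact that a perturbed entry always carries original mass strictly inside $(0,1)$, so that the contradiction $\AttFilter(T_0|S)>0=\AttFilter(T_0|S)$ genuinely arises. The secondary subtlety is the value-preservation invariant used in both cases, which is not stated separately but is immediate from the defining condition of each perturbation step in the construction.
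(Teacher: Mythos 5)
There is a genuine gap in your argument: you treat every perturbation above the base level as witness-generated, but the construction also creates perturbations that have \emph{no} witness. Specifically, at a stage $S'$ with $|S'|>k_{\AttFilter}$, the construction perturbs two entries: the smallest set possessing a witness, and a second set $T'$ which, when the smallest set is the \emph{only} one possessing a witness, is taken to be the largest set with $\AttFilter(T'|S')\in(0,1)$. This second, ``balancing'' perturbation is imposed purely to keep $\sum_{T\subset S'}\AttFilter'(T|S')=1$; it comes with no set $S''\subsetneq S'$ satisfying $\AttFilter'(T'|S'')\neq\AttFilter''(T'|S'')$ and $\AttFilter(T'|S')=\AttFilter(T'|S'')$. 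Consequently your key step --- that if $|S^*|>k_{\AttFilter}$ then the perturbation of $(T|S^*)$ must come from a strictly smaller witness inside $S^*$, contradicting minimality, hence $|S^*|=k_{\AttFilter}$ --- fails: the entry $(T|S^*)$ may have been perturbed because $T$ played the balancing role at stage $S^*$, in which case there is no witness inside $S^*$ and yet $|S^*|>k_{\AttFilter}$ is perfectly possible. The same omission invalidates your ``value-preservation invariant'': balancing perturbations carry no equality of the form $\AttFilter(T|S')=\AttFilter(T|S'')$, so the chaining that yields $\AttFilter(T|S_T)=\AttFilter(T|S^*)$ need not hold.

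This missing case is precisely the one the paper's proof isolates. The paper argues that, by construction, either there exists $S_{S_T}\subset S_T$ with $\AttFilter'(T|S_{S_T})\neq\AttFilter''(T|S_{S_T})$ and $\AttFilter(T|S_T)=\AttFilter(T|S_{S_T})$ (your witness case, disposed of by induction on $|S|$), or $T$ is the largest set satisfying $\AttFilter(T|S_T)\in(0,1)$ (the balancing case). In the balancing case the paper invokes monotonicity (Assumption \ref{Assumption: monotonicity}): triangularity and the fact that $\AttFilter(T|S_T)\in(0,1)$ force all mass of $\AttFilter(\cdot|S_T)$ onto subsets of $T$, and since $\AttFilter(T_0|S_T)\leq\AttFilter(T_0|T)$ for every $T_0\subset T$ while both families sum to one, it follows that $\AttFilter(T_0|T)=\AttFilter(T_0|S_T)$ for all $T_0\subset T$; in particular $\AttFilter(T|T)=\AttFilter(T|S_T)=\AttFilter(T|S)$, which is the step underlying the conclusion that one can set $S_T=T$ in that branch. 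Your preliminary reductions, your base-level analysis via Lemma \ref{Lemma: appendix, thm6.1}, and your monotonicity squeeze showing that no $T_0\subsetneq T$ can have a witness are all correct, and correspond to the witness branch of the paper's argument; but without handling the balancing branch the proof is incomplete.
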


\begin{proof}
The claim follows from Lemma \ref{Lemma: appendix, thm6.1} when $|S|=k_{\AttFilter}+1$. Suppose the claim holds whenever $|S|\leq l$. We show that the claim holds when $|S|=l+1$. Let $T\subset S$ and $S_T\subset S$ satisfy the description and suppose $T$ is not the smallest set in $S$ satisfying the description. Since $\AttFilter'(T|S_T)\neq \AttFilter''(T|S_T)$, by construction, either $T$ is the largest set satisfying $\AttFilter(T|S_T)\in (0,1)$ or there exists $S_{S_T}\subset S_T$ such that $\AttFilter'(T|S_{S_T})\neq \AttFilter''(T|S_{S_T})$ and $\AttFilter(T|S_T)=\AttFilter(T|S_{S_T})$. If the first case is true, then since $\AttFilter$ is monotonic, it must be the case that  $\AttFilter(T'|T)=\AttFilter(T'|S_T)$ for all $T'\subset T$, and hence we are done. In the second case, the claim follows from induction. 
\end{proof}

\begin{lem}
For any $S$, there exist either zero or two subsets satisfying $\AttFilter'(T|S)\neq \AttFilter''(T|S)$. Moreover if there are two sets satisfying the description, then $\AttFilter'(T_1|S)> \AttFilter''(T_1|S)$ if and only if $\AttFilter'(T_2|S)< \AttFilter''(T_2|S)$.
\end{lem}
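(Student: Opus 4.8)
The plan is to argue by induction on $|S|$, carrying along exactly the invariant asserted by the lemma: at every $S$ the constructed rules $\AttFilter'$ and $\AttFilter''$ disagree on either zero or exactly two subsets, and in the latter case the two disagreements have opposite signs. It is worth stressing at the outset why this invariant is the right thing to maintain: opposite signs mean the two perturbations are $+\epsilon$ and $-\epsilon$, so they cancel in the sum, giving $\sum_{T\subset S}\AttFilter'(T|S)=\sum_{T\subset S}\AttFilter''(T|S)=1$. This is what keeps $\AttFilter'$ and $\AttFilter''$ genuine attention rules with $\AttFilter=\frac{1}{2}\AttFilter'+\frac{1}{2}\AttFilter''$, and it is also what makes the branch ``$T'$ denotes the other set for which $S_{T'}$ exists'' in the construction unambiguous.

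For the base case $|S|\leq k_{\AttFilter}$ I would read the conclusion straight off the construction. When $|S|<k_{\AttFilter}$, or $|S|=k_{\AttFilter}$ with $S\notin\mathcal{X}_{\AttFilter}$, the rule is deterministic and $\AttFilter'=\AttFilter''=\AttFilter$ there, so zero subsets disagree. When $|S|=k_{\AttFilter}$ and $S\in\mathcal{X}_{\AttFilter}$, Lemma \ref{Lemma: appendix, thm6.1} supplies exactly two subsets on which $\AttFilter(\cdot|S)\in(0,1)$, namely some $T\subsetneq S$ and $S$ itself; the construction perturbs $T$ by $+\epsilon$ and $S$ by $-\epsilon$ in $\AttFilter'$ (and the reverse in $\AttFilter''$), so precisely two subsets disagree with opposite signs, taking $T_1=T$ and $T_2=S$.

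For the inductive step with $|S|=l+1$ I would split into the two branches of the construction. If no pair $(T,S_T)$ with $S_T\subsetneq S$, $\AttFilter'(T|S_T)\neq\AttFilter''(T|S_T)$ and $\AttFilter(T|S)=\AttFilter(T|S_T)$ exists, then $\AttFilter'=\AttFilter''=\AttFilter$ on $S$ and zero subsets disagree. Otherwise the construction perturbs the smallest such problematic $T$ together with a second set $T'$, assigning them opposite signs by fiat, so the opposite-sign half of the conclusion is immediate. The entire content of the step is therefore to show that \emph{at most two} subsets are ever problematic at $S$, so that ``the other set'' is well-defined and no third subset is silently left disagreeing; once that count is pinned down, the two perturbed subsets are exactly $T$ and $T'$ with opposite signs.

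This counting is the main obstacle, and I would attack it by combining the preceding lemma with triangularity and the inductive hypothesis. Triangularity forces every positive-mass subset of $S$ to lie in a single $\subset$-chain $T_1\subset T_2\subset\cdots\subset S$, so the candidates for perturbation are already linearly ordered. For each removed alternative $a$, the inductive hypothesis says only two subsets are perturbed at $S-a$, with opposite signs, and a subset $T$ becomes problematic at $S$ only through the binding equality $\AttFilter(T|S)=\AttFilter(T|S-a)$ of the monotonicity constraint. The preceding lemma then pins each problematic $T$ to being either the smallest problematic set or one admitting the witness $S_T=T$; feeding this dichotomy back into the nested chain, I would show that the perturbations reaching $S$ from the various $S-a$ collapse onto a single adjacent pair in the chain, delivering exactly one $+\epsilon$ and one $-\epsilon$. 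The final bookkeeping—verifying that the sign propagated to $T$ matches the sign recorded at its witness $S_T$, and that $T'$ is given the opposite sign—then closes both halves of the invariant and completes the induction.
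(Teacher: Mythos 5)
Your skeleton matches the paper's: induction on $|S|$, base case read off the construction via Lemma \ref{Lemma: appendix, thm6.1}, and the observation that the construction assigns opposite signs to whatever it perturbs. But the inductive step has two genuine gaps. First, your claim that ``the entire content of the step is to show that at most two subsets are ever problematic'' is wrong: the paper must also dispose of the case where \emph{exactly one} problematic $T$ exists. There the construction pairs $T$ with ``the largest set $T'$ with $\AttFilter(T'|S)\in(0,1)$,'' and one must prove that such a $T'$ exists and is a \emph{strict} superset of $T$; otherwise the construction could end up perturbing only one set (or perturbing $T$ ``against itself''), leaving exactly one disagreement at $S$ and destroying both the zero-or-two conclusion and the sum-to-one property. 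The paper proves this by showing $\sum_{T''\subset T}\AttFilter(T''|S)<1$: monotonicity gives $\AttFilter(T''|S)\leq\AttFilter(T''|S_T)$ for all $T''\subset T$, and the inductive hypothesis applied to $S_T$ (two disagreements there) forces either a strict inequality among these terms or interior mass strictly above $T$ inside $S_T$; either way some $T'\supsetneq T$ carries mass in $(0,1)$ at $S$. Nothing in your proposal plays this role.

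Second, your counting argument is a restatement of the goal, not a proof. You frame problematic sets as arising ``for each removed alternative $a$ \ldots at $S-a$,'' but in the construction a witness $S_T$ is an \emph{arbitrary} proper subset of $S$ with binding monotonicity $\AttFilter(T|S)=\AttFilter(T|S_T)$ and a disagreement at $S_T$; reducing to witnesses of the form $S-a$ requires an extra propagation argument (binding equalities do pass to intermediate sets, but disagreements only do so after invoking the very counting claim you are trying to prove, via a strengthened induction you never set up). And the decisive sentence — ``feeding this dichotomy back into the nested chain, I would show that the perturbations \ldots collapse onto a single adjacent pair'' — supplies no mechanism. The paper's mechanism is a concrete contradiction: if $T_1\subset T_2\subset T_3$ were all problematic at $S$ (chain ordering from triangularity, as you note), the preceding lemma lets one take $S_{T_2}=T_2$ and $S_{T_3}=T_3$; binding monotonicity then transfers all three down into the smaller choice problem $T_3$, so three subsets of $T_3$ satisfy the description, contradicting the inductive hypothesis at size $|T_3|\leq l$. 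Without this step (or an equivalent one), your induction does not close.
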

\begin{proof}
The claim is trivial when $|S|=k_{\AttFilter}$. Suppose the claim is true for all $S$ with $|S|\leq l$ and let $S$ with $|S|=l+1$ be given. If there is no $T$ which satisfies the description in the construction, then no subset will be affected. Suppose there exists only one such $T$. We show that there exists $T'\supset T$ such that $\AttFilter(T'|S)\in (0,1)$. To see this notice that by monotonicity property $\AttFilter(T''|S)\leq \AttFilter(T''|S_T)$ for all $T''\subset T$. Since by induction there are two subsets of $S_T$ for which $\AttFilter'(T|S_T)\neq \AttFilter''(T|S_T)$ either $\AttFilter(T''|S)<\AttFilter(T''|S_T)$ for some $T''\subset T$ or there exists $T'''\supset T$ such that $\AttFilter(T'''|S_T)\in (0,1)$. In both cases, $\sum_{T''\subset T}\AttFilter(T''|S)<1$ follows. Hence, there is $T'\supset T$ such that $\AttFilter(T'|S)\in (0,1)$. The construction then guarantees that $\mu'(T'|S)\neq \mu''(T'|S)$ for some $T'\supset T$. Now suppose there are three subsets, $T_1,T_2$, and $T_3$, satisfying the description. Since $\AttFilter$ is triangular, we can without loss of generality assume that $T_1\subset T_2\subset T_3$. By the previous lemma, we can without loss of generality assume $S_{T_2}=T_2$ and $S_{T_3}=T_3$. But then since $\AttFilter$ is monotonic, 3 subsets of $S_{T_3}$ must satisfy the description, a contradiction to induction hypothesis. 

To prove the second part of the claim, notice that the claim is follows from construction if $|S|=k_{\AttFilter}$. Suppose the claim holds whenever $|S|\leq l$ and let $|S|=l+1$ be given. If $T_2=S$, then the claim follows from construction. If $T_2\subsetneq S$, then the claim follows from induction and construction by considering the set $T_2$.
\end{proof}

It is clear that $\AttFilter=\frac{1}{2}\AttFilter'+\frac{1}{2}\AttFilter''$. The previous lemmas also show that both $\AttFilter'$ and $\AttFilter''$ are monotonic. Hence, no $\AttFilter\in \mathcal{MT}(\succ)- \mathcal{AF}(\succ)$ can be an extreme point. This concludes the proof of Theorem \ref{Theorem: Random Model}.

\bibliographystyle{jpe}
\bibliography{Cattaneo-Ma-Masatlioglu-Suleymanov_2019_RAM}

\end{document}